\def\BibTeX{{\rm B\kern-.05em{\sc i\kern-.025em b}\kern-.08em
		T\kern-.1667em\lower.7ex\hbox{E}\kern-.125emX}}
\newcommand{\beq}[1]{\begin{equation}\label{#1}}
\newcommand{\eeq}{\end{equation}}
\newcommand{\PP}{\mathcal{P}}
\newcommand{\req}[1]{(\ref{#1})}
\newcommand{\bmu}[1]{\begin{multline}\label{#1}}
\newcommand{\emu}{\end{multline}}
\theoremstyle{definition}
\newtheorem{exmp}{Example}
\renewcommand{\(}{\left(}
\renewcommand{\)}{\right)}
\newcommand{\len}{\left\lfloor}
\newcommand{\rin}{\right\rfloor}
\newcommand{\lev}{\left\lceil}
\newcommand{\riv}{\right\rceil}
\newcommand{\eq}{\triangleq}
\newcommand{\A}{\mathcal{A}}
\renewcommand{\P}{\mathcal{P}}
\renewcommand{\L}{\mathcal{L}}
\newcommand{\x}{{\textbf{\textit{x}}}}
\newcommand{\z}{{\textbf{\textit{z}}}}
\newcommand{\e}{\mathbf{e}}
\newcommand{\p}{{\textbf{\textit{p}}}}
\newcommand{\Q}{{\mathcal Q}}
\renewcommand{\l}{\ell}
\newtheoremstyle{agdTheorem}{\parskip}{\parskip}{\itshape}{\parindent}{\bfseries}{}{0pt}{\thmname{#1}\thmnumber{~#2}.\thmnote{~\textnormal{#3.}}\quad}
\theoremstyle{agdTheorem}
\newtheorem{theorem}{Theorem}
\newtheorem{lemma}{Lemma}
\newtheorem{corol}{Corollary}
\newtheorem{proposition}{Proposition}
\newtheorem*{th::upperBoundB}{Theorem \ref{th::upperBoundB}}
\newtheorem*{th::lowerBoundB}{Theorem \ref{th::lowerBoundB}}
\newtheorem*{th::lowerBoundA}{Theorem \ref{th::lowerBoundA}}
\newtheorem*{cr::upperBoundA}{Corollary \ref{cr::upperBoundA}}
\newtheorem*{pr::capacityRate}{Proposition \ref{entropy bound}}
\newtheoremstyle{agdDefinition}{\parskip}{\parskip}{}{\parindent}{\bfseries}{}{0pt}{\thmname{#1}\thmnumber{~#2}.\thmnote{~\textnormal{#3.}}\quad}
\theoremstyle{agdDefinition}
\newtheorem{definition}{Definition}
\newtheorem{remark}{Remark}
\begin{document}
	
	\title{ Separable Codes for the \\ Symmetric
		Multiple-Access Channel}
	
	\author{
		Arkadii G. D'yachkov, Nikita A. Polyanskii,
		Ilya V. Vorobyev, and Vladislav Yu. Shchukin
		
		\thanks{A. G. D'yachkov is with the Lomonosov Moscow State University, Moscow 119991, Russia (e-mail: agd-msu@yandex.ru).}
		
		\thanks{N. A. Polyanskii is with the Israel Institute of Technology, Haifa 32000, Israel, and with  the Skolkovo Institute of Science and Technology, Moscow 121205, Russia (e-mail: nikitapolyansky@gmail.com).}
		
		\thanks{V. Yu. Shchukin is with the Institute for Information Transmission Problems, Moscow 127051, Russia (e-mail: vpike@mail.ru).}
		
		\thanks{I. V. Vorobyev is with the Skolkovo Institute of Science and Technology, Moscow 121205, Russia, and also with the Moscow Institute of Physics and Technology, Dolgoprudny 141701, Russia (e-mail: vorobyev.i.v@yandex.ru).}
		
		\thanks{A. G. D'yachkov, N. A. Polyanskii, V. Yu. Shchukin, and  I. V. Vorobyev,
			are supported by the Russian Foundation for Basic Research
			under grant No. 16-01-00440 a. N. A. Polyanskii is supported in part by the Israel Science Foundation grant nos. 1162/15, 326/17.}
	}
	
	\maketitle
	
	\begin{abstract}
		A binary matrix is called an \textit{$s$-separable code} for  the \textit{disjunctive
			multiple-access channel} (\textit{$disj$-MAC}) if Boolean sums of sets
		of $s$ columns are all distinct.  The well-known issue of the combinatorial
		coding theory is to obtain upper and lower bounds on the rate  of $s$-separable
		codes for the $disj$-MAC. In our paper, we generalize the problem  and
		discuss  upper and lower bounds on the rate  of $q$-ary
		$s$-separable  codes for  models of noiseless \textit{symmetric} MAC, i.e.,
		at each time instant the output signal
		of MAC is a symmetric function of its $s$ input signals.
	\end{abstract}
	
	\begin{IEEEkeywords}
		Multiple-access channel (MAC), separable codes,
		random coding method, list-decoding.
	\end{IEEEkeywords}

	\section{Introduction}
	\label{Introduction}
	\IEEEPARstart{W}{e} study some combinatorial coding problems  for the multiple access channel (MAC) that were motivated
	by two specific noiseless MAC models, corresponding to
	the transmission of $q$-ary symbols based on the frequency modulation method.
	Both  models were  suggested in
	the  paper~\cite{cw81} and were called  the $s$-user $q$-frequency MAC with (the $B$--MAC)
	and without (the $A$--MAC) intensity information.
	Using a well-known terminology~\cite{ck81} of the combinatorial coding theory, we describe
	the $A$--MAC and the $B$--MAC coding problems  along with the previously obtained results as follows.
	
	Given arbitrary integers $2\le s<t/2$, $q\ge2$ and $N\ge2$, introduce a code $X$  consisting
	of $t$  codewords of length~$N$ over a $q$-ary
	alphabet. The code $X$ is called
	\begin{itemize}
		\item
		{\em $s$-separable}~\cite{cheng2011anti} code for the $A$--MAC if for any two distinct $s$-tuples of the codewords there exists a
		coordinate $i$, $1\le i\le N$,
		in which {\em the union of $s$ elements of the  first $s$-tuple differs from the union of $s$ elements of the second $s$-tuple}.
		\item
		{\em $s$-separable}~\cite{ep16} code for the $B$--MAC if for any two distinct $s$-tuples of the codewords there exists a
		coordinate $i$, $1\le i\le N$,
		in which {\em the type (or the composition) of the  first $s$-tuple differs from the type of the second $s$-tuple}.
		\item
		{\em $(\le\!\! s)$-separable}~\cite{cheng2011anti} code for the $A$--MAC if for any $k$-tuple and any $m$-tuple,
		where $1\le k,m\le s$, of the codewords there exists a
		coordinate $i$, $1\le i\le N$, in which {\em the union of
			$k$ elements of the  $k$-tuple differs from the union of 
			$m$ elements of the  $m$-tuple}.
		\item
		{\em $s$-frameproof} code~\cite{boneh1998collusion} if for any $s$-tuple of the codewords
		and every other codeword,
		there exists a coordinate $i$, $1\le i\le N$, in which {\em the symbol of the other
			codeword doesn't belong to  the union of $s$ elements of the  $s$-tuple}.
		
		\item
		{\em $s$-hash} code~\cite{fk84,mehlhorn1984sorting} if $q\ge s$ and for every $s$-tuple of the codewords
		there exists a coordinate $i$, $1\le i\le N$, in which they all are differ.	
	\end{itemize}
	
	If $t^{(A)}(s,q,N)$ denote the largest
	size of $s$-separable  codes for the $A$--MAC, then the number
	$$
	R^{(A)}(s,q)=\varlimsup_{N\to\infty}
	\frac{\ln \,t^{(A)}(s,q,N)}{N},
	$$
	is said to be the rate of $s$-separable  codes for the $A$--MAC. By the similar way we
	define the rate $R^{(B)}(s,q)$ of $s$-separable codes for the $B$--MAC, the rate $R^{(hash)}(s,q)$ of $s$-hash codes, the rate $R^{(A)}(\le\!\!s,q)$ of $(\le\!\!s)$-separable codes
	and  the rate $R^{(fp)}(s,q)$ of $s$-frameproof codes.
	\subsection{Related Work}
	Multimedia fingerprinting is a technique to trace the sources of pirate copies of copyrighted multimedia contents. Separable codes for the $A$--MAC were introduced in~\cite{cheng2011anti} as an efficient tool to construct codes for multimedia fingerprinting in the context of ``averaging attack''. Due to its importance, constructions, applications and bounds on the rate of separable codes were further investigated and discussed in many papers~\cite{cheng2012separable,gg14, b15}. 
	
	Other security models and applications related to separable codes have been considered, and various classes of codes were defined in the literature. We only mention the most significant one and refer the reader to~\cite{boneh1998collusion}, where the problem of preventing an adversary from framing an innocent user was addressed, and the definition of frameproof codes was given. The latter were studied extensively in~\cite{ge14, cheng2011anti, staddon2001combinatorial, blackburn2003frameproof, d2017cover}. 
	
	Finally, hash codes have undergone study due to their applications in information retrieval, cryptography and algorithms. Different problems on hash codes were considered and developed in~\cite{swc08, dbk97, fk84,mehlhorn1984sorting}.
	
	Recall the well-known results emphasizing the connection between separable codes, hash codes and frameproof codes
	\beq{properties}
	\begin{split}
		R^{(A)}(\le\!\!s,q)
		&\le \min\left\{R^{(fp)}(s-1,q),\,R^{(A)}(s,q)\right\},\;   \\
		R^{(fp)}(s,q)&\le R^{(A)}(\le\!\!s,q),\\
		R^{(hash)}(s,q)&\le R^{(fp)}(s-1,q),\quad q\ge s\ge2,
	\end{split}
	\eeq
	and asymptotic ($q\to\infty$) lower and upper bounds
	\beq{asymp lower-upper bounds}
	\begin{split}
		R^{(hash)}(s,q) &\ge \frac{\ln q}{s-1}\,(1+o(1)),\\
		R^{(fp)}(s,q) &\le \frac{\ln q}{s}\,(1+o(1)).
	\end{split}
	\eeq
	
	The first and the second
	inequalities  in~\req{properties} are simple reformulations of the
	corresponding evident properties of binary superimposed codes~\cite{ks64,dr83}.
	The third  inequality  in~\req{properties} is trivially implied from the definitions.
	The upper bound for frameproof codes in~\req{asymp lower-upper bounds}  is given in~\cite{blackburn2003frameproof} and is based on the same idea as an upper bound for hash codes~\cite{dyachov1997upper,dbk97}. The asymptotic lower bound in~\req{asymp lower-upper bounds}
	is an obvious corollary of the random coding lower bound proved in~\cite{fk84,km88}.
	From~\req{properties} and \req{asymp lower-upper bounds}, it follows  the
	asymptotic ($q\to\infty$) equalities:
	\beq{known asymp equalities}
	R^{(hash)}(s,q) \sim \frac{\ln q}{s-1}, \quad	R^{(fp)}(s,q) \sim \frac{\ln q}{s}.
	\eeq
	Moreover, recent papers~\cite{gg14,b15} contains proofs of the asymptotic ($q\to\infty$) equalities:
	\beq{recent asymp equalities}
	R^{(A)}(\le\!\! 2,q) \sim \frac{2 \ln q}{3}; \quad
	R^{(A)}(\le\!\! s,q) \sim \frac{\ln q}{s-1}, \quad s \ge 3.
	\eeq
	
	Unlike \req{known asymp equalities} and \req{recent asymp equalities},
	the asymptotic behavior of the rates $R^{(A)}(s,q)$ and $R^{(B)}(s,q)$
	of $s$-separable codes for the $A$--MAC and the $B$--MAC is unknown at present.
	The aim of  our paper is a further development and generalizations
	of the given open problems. 
	
	\subsection{Outline}
	The remainder of the paper is organized as follows. After introducing notations, in Section~\ref{Statement}, we give formal definitions of MAC and a separable code for MAC, and describe
	five models of MACs,
	which are important
	for applications.  In Section~\ref{impEntropy} we discuss the entropy upper bound on the rate of separable codes for any symmetric MAC and its known and new improvements. In particular, a combinatorial upper bound on $R^{(B)}(s,q)$ is given by Theorem~\ref{th::upperBoundB}. In Section~\ref{Random A and B}, new asymptotic random coding bounds on the rate of separable codes for the $B$--MAC and the $A$--MAC are presented by Theorem~\ref{th::lowerBoundB} and Theorem~\ref{th::lowerBoundA}, respectively. In Section~\ref{LDC}, we introduce the concept of list-decoding codes for the $A$--MAC and obtain an upper bound on the rate of these codes, matching with the known lower bound for the very large alphabet size. Based on a simple connection between list-decoding codes and separable codes, we also derive an upper bound on $R^{(A)}(s,q)$, given by Corollary~\ref{cr::upperBoundA}. Finally, in Appendix, we discuss a natural probabilistic generalization of separable codes and give some  random coding bounds on the error exponent of almost separable codes  and on the rate of separable codes for any symmetric $f$--MAC.
	
	In particular,  as
	new results we claim the following.
	\begin{th::upperBoundB}
		For any $s \ge 2$ and $q \ge 2$, the rate of $s$-separable $q$-ary codes for the $B$--MAC
		satisfies the  inequality
		\begin{equation*}
		R^{(B)}(s,q)\leq
		\begin{cases}
		\frac{s+1}{2s}\ln q,\quad \text{if $s$ is odd}.\\
		\frac{s+2}{2(s+1)}\ln q,\quad \text{if $s$ is even}.\\
		\end{cases}
		\end{equation*}
	\end{th::upperBoundB}
	\begin{th::lowerBoundB}
		If $s\ge2$ is fixed and   $q\to\infty$, then
		the rate $R^{(B)}(s,q)$
		satisfies the asymptotic inequality
		$$
		R^{(B)}(s,q)\ge \frac{s}{2s-1}\ln q\,(1+o(1)).
		$$
	\end{th::lowerBoundB}
	\begin{th::lowerBoundA}
		If $s\ge2$ is fixed and   $q\to\infty$, then
		the rate $R^{(A)}(s,q)$
		satisfies the asymptotic inequality
		$$
		R^{(A)}(s,q)\ge \frac{2}{s+1}\ln q\,(1+o(1)).
		$$
	\end{th::lowerBoundA}
	\begin{cr::upperBoundA}
		For any $s \ge 2$ and $q \ge 2$, the rate of $s$-separable $q$-ary codes for the $A$--MAC
		satisfies the  inequality
		$$
		R^{(A)}(s,q)\le \frac{2}{s}\ln q.
		$$
	\end{cr::upperBoundA}


	\section{Statement of the Problem}
	\label{Statement}
	\subsection{Notations}
	\label{Notations}
	Let $q$, $N$, $t$, $s$ and $L$ be integers, where $q\ge2$, $N\ge2$, $2 \le s < t/2$, $1\le L\le t-s$;
	symbol $\eq$ is the equality by definition;
	$\A_q\eq\{0,1,\ldots,q-1\}$ is the standard $q$-ary alphabet;
	$[N] \eq \{1, 2, \ldots, N\}$ is the set of integers from $1$ to~$N$;
	$|A|$ is the size of the set~$A$;
	$\lceil b\rceil$ is the least integer $\ge b$;
	$\lfloor b\rfloor$ is the largest integer $\le b$.
	A $q$-ary $(N \times t)$-matrix $X = (x_i(j)) $, $i\in[N]$, $j\in[t]$, $x_i(j)\in\A_q$,
	with $t$ columns (\textit{codewords})
	$\x(j) \eq (x_1(j), \ldots, x_N(j))\in\A_q^N$,
	$j \in [t]$, and
	$N$ rows
	$\x_i \eq (x_i(1), x_i(2) \ldots, x_i(t))\in\A_q^t$,
	$i \in [N]$,
	is called a \textit{$q$-ary code of length $N$ and size $t$}.
	
	For a $q$-ary vector $\x = (x_1, \ldots, x_s)\eq x_1^s \in \A_q^s$,
	define the integer vector $\left(s_0,s_1,\ldots,s_{q-1}\right)$ of length $q$,
	where $s_i=s_i(\x)$,  $0\le s_i\le s$, $i\in\A_q$,
	is the number
	of positions $i$, $i\in[s]$, such that $x_i=a$. 
	Obviously, $\sum_{i=0}^{q-1}\,s_i=s$. The vector $\left(s_0,\ldots,s_{q-1}\right)$
	is said to be a {\em type}  of the $q$-ary vector $x_1^s\in\A^s_q$
	or, briefly,
	\beq{compN}
	T\left(x_1^s\right)\eq\left(s_0,\ldots,s_{q-1}\right).
	\eeq
	The set $2^{Y}$ of all subsets of a set $Y$ (or the power set of $Y$) is abbreviated by $\P(Y)$. Let $\P(Y,N)$ stand for the Cartesian product of $N$ copies of $\P(Y)$. The union of the $q$-ary vector $x_1^s\in\A_q$ is denoted by 
	\beq{uf}
	U(x_1^s)\eq \bigcup_{i\in s}x_i\in \P(\A_q).
	\eeq
	Let the standard symbol ${[t] \choose s}$ be the set of all
	$s$-subsets of the set~$[t]$. 
	For any $\e=\{e_1,\ldots,e_s\}\in {[t] \choose s}$, called a {\em message},
	and a code $X$, 
	consider the non-ordered $s$-{\it collection} of codewords 
	\beq{xe1}
	\x({\bf e})\eq\left\{\x(e_1),\ldots,\x(e_s)\right\}. 
	\eeq
	For a collection of codewords $V = \{\x(i_1),\ldots,\x(i_s)\}\subset\A_q^N$, by $T(V)$ and $U(V)$ we abbreviate the $q$-ary $(N\times q)$ matrix and the vector from $\P(\A_q, N)$ which are defined in the following way
	\beq{bch}
	\begin{split}
		T(V) \eq (T(x_1(i_1),\ldots, x_1(i_s)),\ldots, T(x_N(i_1),\ldots, x_N(i_s)))^T,\\ U(V)\eq(U(x_1(i_1),\ldots, x_1(i_s)),\ldots, U(x_N(i_1),\ldots, x_N(i_s)))^T.
	\end{split}
	\eeq
	
	\subsection{The Symmetric Multiple-Access Channel}
	\label{CombinatorialDefinitions1}
	We use
	the terminology of the noiseless (deterministic)  {\em
		multiple-access channel} (MAC), which has $s$ inputs and one
	output~\cite{ck81}. 
	Let all $s$ input alphabets of MAC be the same and coincide with the
	alphabet $\A_q$.
	Denote by $Z$ the finite output alphabet of size~$|Z|$.
	Given $s$ inputs $(x_1,\ldots, x_s)\in\A_q^s$  of MAC, the noiseless MAC is prescribed  by the function
	\beq{zf}
	z=f(x_1,\ldots,x_s)\eq f(x_1^s),\quad z\in Z,\;
	x_1^s\in\A_q^s.
	\eeq
	The  deterministic model of
	MAC is called  an~$f$--MAC.
	
	\begin{definition}
		\label{defSig}
		An $f$--MAC, given by~\req{zf}, is said to be
		the {\em symmetric} $f$--MAC if
		for any permutation  $\pi\in S_s$, where $S_s$ is the symmetric group on $s$ elements,
		the following equality holds
		\begin{equation}\label{perm}
		f\left(x_1,\ldots,x_s\right)=f\left(x_{\pi(1)},\ldots,x_{\pi(s)}\right). 
		\end{equation}
	\end{definition}
	\begin{remark}\label{rem1}
		Note that to determine a function $f=f(x_1,\ldots, x_s)=f(x_1^s)$ for the symmetric $f$--MAC it is necessary and sufficient to define $f$ only on different compositions $(s_0,s_1,\ldots, s_q)=T(x_1^s)$, $x_1^s\in\A_q^s$, or in other terms on multisets of cardinality $s$ ($s$-collections) over $A_q$. 
	\end{remark}
	In what follows, we consider
	symmetric $f$--MACs only.
	
	\subsection{Separable  Codes}
	\label{CombinatorialDefinitions2}
	For any message $\e\in {[t] \choose s}$ and a code $X$, let  $\x_i({\bf e})=\{x_i(e_1),\ldots, x_i(e_s)\}$, $i\in[N]$, be the 
	$s$-collection of signals~\req{xe1} at $s$ symmetric $f$--MAC inputs at the $i$-th time unit. 
	Then the 
	signal $z_i$, $z_i\in Z$,  $i\in[N]$,
	at the output of the symmetric $f$--MAC at the $i$-th time unit
	is
	\beq{zif}
	z_i=z^{(f)}_i({\bf e},X)\eq
	f(x_i(e_1),\ldots, x_i(e_s))\in Z. 
	\eeq
	
	On the base of the code $X$ and $N$  signals
	\beq{ChannelOut}
	\z^{(f)}({\bf e},X)\eq(z^{(f)}_1({\bf e},X),\ldots,z^{(f)}_N({\bf e},X))\in Z^N,
	\eeq
	which are known at the output of  MAC,  an {\it observer} makes the
	{\em brute force}
	decision about the unknown message~${\bf e}$.  To identify
	${\bf e}$, a code $X$ is assigned.
	\begin{definition}
		\label{Signature}
		A $q$-ary code $X$ is said to be a \textit{$s$-separable} code
		of size~$t$ and length $N$ 
		for the $f$--MAC  if  all
		$\;\z^{(f)}({\bf e},X)$, ${\;\bf e}\in {[t] \choose s}$, are distinct.
	\end{definition}
	
	
	Let $t^{(f)}(s,q,N)$ 
	be the {\it maximal  size}
	of $s$-separable $q$-ary  codes  of length $N$ for the~$f$--MAC.
	For fixed $s\ge 2$ and $q\ge2$,  the number
	\beq{Rsq}
	R^{(f)}(s,q)\eq\varlimsup_{N\to\infty}
	\frac{\ln \,t^{(f)}(s,q,N)}{N},
	\eeq
	is said to be a {\it  rate} of $s$-separable $q$-ary codes for the~$f$--MAC.
	\subsection{Examples of the Symmetric MAC}
	\label{examples}
	
	\subsubsection{$A$--MAC}
	\label{unionChannel}
	The $A$--MAC is described  by the  function
	\beq{union1}
	z=f(x_1^s)\eq\, U(x_1^s)\subseteq\A_q,
	\eeq
	where the union function $U(\cdot)$ is given in~\eqref{uf}.
	For instance, if $s=4$ and $q=3$, then
	\beq{union3}
	U(0,0,1,1)= \{0,1\},\quad
	U(1,1,0,2)= \{0,1,2\}.
	\eeq
	The cardinality $|Z|$ of  output alphabet $Z$ for the
	$A$--MAC is $|Z|=\sum\limits_{k=1}^{\min(s,q)}\,{q\choose k}$. For $s\ge q$, we have~$|Z|=2^q - 1$.
	
	\subsubsection{$B$--MAC}
	\label{compChannel}
	The  $B$--MAC known also as the compositional channel is described by the function
	\beq{comp1}
	z=f(x_1^s)\eq \,T(x_1^s),\quad x_1^s=(x_1,\ldots,x_s)\in\A_q^s, 
	\eeq
	where the type $T(\cdot)$ of a vector is defined by~\req{compN}.
	For instance, if $s=4$ and $q=3$, then
	$$
	T(0,0,1,1)=(2,2,0),\quad T(1,1,0,2)=(1,2,1).
	$$
	The cardinality of the output alphabet for the
	$B$--MAC is $|Z|=\binom{q+s-1}{s}$, $s\ge2$, $q\ge2$.
	We acknowledge  the  paper~\cite{cw81}, in which the significant
	applications of the $B$--MAC  and
	the  $A$--MAC were firstly developed. We also refer the reader
	to~\cite{cw81, bp00,wz97, ep16, vk96}, where the maximal output
	entropy of the $A$-MAC and the $B$-MAC  was investigated in different
	asymptotic and non-asymptotic cases.
	
	\subsubsection{Erasure  MAC}
	\label{erasChannel}
	A  $q$-ary  $f$--MAC is said to be the
	{\em erasure} MAC (briefly, $eras$--MAC) if it
	has  the $(q+1)$-ary  output alphabet  $Z\eq\{0,1,\ldots,q-1,*\}$ and the output function
	$z=f(x_1^s)$ has the form:
	$$
	z=f(x_1,\ldots,x_s)
	\eq\begin{cases}
	x, & \text{if $x_1=\ldots=x_s=x$},\; x\in\A_q,\cr
	*, &  \text{otherwise}.\cr
	\end{cases}
	$$
	The $eras$-MAC model  can be considered as an adequate description for the transmission
	of $q$-ary symbols based on the  {\em frequency modulation} method.
	
	\subsubsection{Threshold  MAC}
	\label{threshChannel}
	The threshold $f_{\ell}$--MAC (briefly, $\ell$-$thr$--MAC)
	has the binary input (i.e., $q=2$) and the output alphabet  $Z\eq\A_2=\{0,1\}$, and
	$$
	z= f_\ell(x_1,\ldots,x_s)
	\eq\begin{cases}
	0, & \text{if  $\sum_{i=1}^s x_i < \ell$},\cr
	1, &  \text{otherwise},\cr
	\end{cases}
	$$
	where terms of the sum are considered as $0$ and $1$ elements of the ring $\mathbb{Z}$.
	Separable codes for the $\ell$-$thr$--MAC can be used
	in \textit{compressed genotyping}~\cite{e10} models in molecular biology.
	\subsubsection{Disjunctive   MAC}
	\label{disjChannel}
	The disjunctive MAC (briefly, $disj$--MAC)
	has the binary input alphabet and the output alphabet  $Z\eq\A_2=\{0,1\}$, and
	$$
	z=f(x_1,\ldots,x_s)
	\eq\begin{cases}
	0, & \text{if  $x_1=\ldots=x_s=0$},\cr
	1, &  \text{otherwise}.\cr
	\end{cases}
	$$
	Notice that the $disj$--MAC is equivalent to the $1$-$thr$--MAC. 
	The $disj$-MAC model is interpreted   as  the transmission
	of binary symbols based on the  {\em impulse  modulation} method.
	In addition, the  binary $s$-separable codes for the $disj$-MAC are 
	closely connected with the {\em combinatorial search theory}~\cite{dh00} and
	the information-theoretic model called the {\em design of screening experiments}~\cite{d03}. 
	
	In what follows, we omit symbol $q=2$ in notations if the corresponding channel is defined only for the binary case.
	
	\section{Improvements of the Entropy Bound}\label{impEntropy}
	In this section, we first give a general statement called the entropy bound on the rate of separable codes for any symmetric MAC. For an asymptotic regime $s\to\infty$, we recall the best known bounds on the rate of separable codes for the disjunctive, the erasure, the threshold, the $A$ and the $B$  MACs in Sections~\ref{DISJSECT}-\ref{EntropyBChannel}, respectively. Finally, in Section~\ref{CUB}, we present Theorem~\ref{th::upperBoundB}, a novel upper bound, which holds for any symmetric MAC and improves the entropy bound.
	\subsection{The Entropy Upper Bound on $R^{(f)}(s,q)$ }
	Let $\p$
	be a fixed probability distribution on the alphabet $\A_q$ and the vector
	$\xi_1^s\eq\{\xi_1,\ldots,\xi_s\}$,
	$\xi_1^s\in\A_q^s$, 
	is the $s$-collection
	of {\em independent} 
	random variables having the same distribution, i.e.,
	$
	\Pr\{\xi_k=a\}\eq \p(a),\;  k\in[s],\; \,a\in\A_q
	$.
	Introduce the corresponding  Shannon entropy of the output of the symmetric $f$--MAC, i.e,
	\beq{entropy}
	H^{(f)}_{\p}(s,q)\eq
	\sum\limits_{z\in Z}\,\Pr\{f\left(\xi_1^s\right)=z\}
	\cdot\ln\frac{1}{\Pr\{f\left(\xi_1^s\right)=z\}}.
	\eeq
	
	\if0
	Using the $f$--MAC definition~\req{zf1}, the probability in the right-hand side~\req{entropy}
	can be written in the form
	\beq{zf2}
	\Pr\{f\left(\xi_1^s\right)=z\}=\sum\limits_{x_1^s}\tau^{(f)}(z|x_1^s)\cdot
	\prod\limits_{k=1}^s p(x_k)
	\eeq
	\fi
	The following
	statement called the {\em entropy upper bound} is a conventional information-theoretic bound.
	
	\begin{proposition}[\cite{d1989bounds}]
		\label{entropy bound}
		The rate $R^{(f)}(s,q)$ of $s$-separable  $q$-ary codes
		for the symmetric $f$--MAC satisfies the inequality
		\beq{f-up}
		R^{(f)}(s,q)\le\,C^{(f)}(s,q)\eq\frac{\max\limits_{\p}\,H_{\p}^{(f)}(s,q)}{s}.
		\eeq
	\end{proposition}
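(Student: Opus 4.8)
The plan is to convert the separability condition into an entropy inequality by feeding a uniformly random message into the channel and then invoking subadditivity of Shannon entropy across the $N$ coordinates. Fix an $s$-separable $q$-ary code $X$ of length $N$ and size $t$, and let $\e$ be drawn uniformly at random from $\binom{[t]}{s}$. Since $X$ is $s$-separable, the map $\e\mapsto\z^{(f)}(\e,X)$ is injective, so the random vector $\z^{(f)}(\e,X)\in Z^N$ is uniform on $\binom{t}{s}$ distinct values and hence $H(\z^{(f)}(\e,X))=\ln\binom{t}{s}$. Writing $z_i=z_i^{(f)}(\e,X)$ for its $i$-th coordinate, subadditivity of entropy then gives $\ln\binom{t}{s}=H(z_1,\ldots,z_N)\le\sum_{i=1}^N H(z_i)$.

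It remains to bound each coordinate entropy $H(z_i)$ by $\max_\p H_\p^{(f)}(s,q)$. First I would note that, because $f$ is symmetric, $z_i=f(x_i(e_1),\ldots,x_i(e_s))$ depends on $\e$ only through the type of the $s$ symbols it selects in row $i$, and these symbols are precisely $s$ entries drawn \emph{without replacement} from the length-$t$ row $\x_i$. Let $\hat\p_i$ denote the empirical type of row $i$, i.e.\ the frequency of each letter of $\A_q$ among $x_i(1),\ldots,x_i(t)$. The key comparison is between this without-replacement (multivariate hypergeometric) law and the with-replacement (i.i.d., multinomial) law used to define $H_{\hat\p_i}^{(f)}(s,q)$. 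For fixed $s$ and $t\to\infty$ the hypergeometric probability of each of the at most $\binom{q+s-1}{s}$ composition types converges to the corresponding multinomial probability, so the induced laws of $z_i$ on the finite alphabet $Z$ converge in total variation; since entropy is continuous on the simplex over the finite set $Z$, this yields $H(z_i)\le H_{\hat\p_i}^{(f)}(s,q)+o(1)\le\max_\p H_\p^{(f)}(s,q)+o(1)$. The main obstacle is the uniformity of this $o(1)$ over rows and over codes: when a letter is rare in row $i$ (frequency $O(1/t)$), the hypergeometric-to-multinomial ratio need not tend to $1$, so I would instead argue that the probability mass attached to composition types involving such a letter is itself $o(1)$ uniformly, whence the total-variation estimate still holds uniformly.

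Combining the two estimates gives $\ln\binom{t}{s}\le N\bigl(\max_\p H_\p^{(f)}(s,q)+o(1)\bigr)$. Using $\ln\binom{t}{s}=s\ln t-\ln(s!)+o(1)$ as $t\to\infty$, dividing by $Ns$, and specializing $t=t^{(f)}(s,q,N)$, I would pass to the subsequence of $N$ realizing the $\limsup$ in $R^{(f)}(s,q)$: if this $\limsup$ is positive then $t\to\infty$ along that subsequence, so the asymptotics of the previous paragraph apply and letting $N\to\infty$ yields $R^{(f)}(s,q)\le\max_\p H_\p^{(f)}(s,q)/s=C^{(f)}(s,q)$; if instead the $\limsup$ is zero the inequality is trivial since $C^{(f)}(s,q)\ge0$. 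This completes the argument.
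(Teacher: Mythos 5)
The paper itself contains no proof of this proposition: it is imported from \cite{d1989bounds}, and Remark~2 in the Appendix states explicitly that the proof (given in that reference for the binary case $q=2$) is omitted because the general case $q\ge2$ follows by the same methods. So your argument can only be judged against the standard converse one would expect that reference to contain --- and that is exactly what you wrote: separability makes $\e\mapsto\z^{(f)}(\e,X)$ injective, hence the output vector is uniform over $\binom{t}{s}$ values and has entropy $\ln\binom{t}{s}$; subadditivity splits this over the $N$ coordinates; and each coordinate entropy is compared to $\max_{\p}H^{(f)}_{\p}(s,q)$ through a without-replacement versus i.i.d.\ comparison. The skeleton is correct, and your concluding limit argument (passing to a subsequence realizing the $\limsup$, and treating the degenerate case $R^{(f)}(s,q)=0$ separately so that $t\to\infty$ may be assumed) is handled properly.

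The only soft spot is the one you flag yourself: the uniformity, over rows and over codes, of the $o(1)$ in $H(z_i)\le H^{(f)}_{\hat{\p}_i}(s,q)+o(1)$. Your sketched repair (mass of composition types involving letters of frequency $O(1/t)$ is uniformly negligible) can be pushed through, but as written it is a plan rather than a proof, and it needs a two-threshold argument to work for letters of intermediate rarity. A cleaner and fully uniform route is a coupling: draw the $s$ indices i.i.d.\ uniformly from $[t]$; they are pairwise distinct except with probability at most $\binom{s}{2}/t$, and conditioned on distinctness they form exactly a uniform ordered sample without replacement. Hence the total variation distance between the with- and without-replacement laws of the $s$ selected symbols --- and a fortiori between the two induced laws of $z_i$ on the finite output alphabet $Z$ --- is at most $\binom{s}{2}/t$, independently of the row composition. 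Combined with uniform continuity of Shannon entropy on the probability simplex over the fixed finite set $Z$ (a Fannes-type estimate), this gives $H(z_i)\le H^{(f)}_{\hat{\p}_i}(s,q)+\eta(t)$ with $\eta(t)\to0$ depending only on $s$, $|Z|$ and $t$, which is precisely the uniform bound your final paragraph requires. With that substitution your proof is complete and, unlike the paper, self-contained for every symmetric $f$--MAC and every $q\ge2$.
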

	Hereinafter, the value $C^{(f)}(s,q)$ is said to be a \textit{capacity} of $s$-separable $q$-ary codes for the $f$--MAC.
	\subsection{Bounds on the Rate $R^{(disj)}(s)$ for the Disjunctive MAC}\label{DISJSECT}
	
	One can check~\cite{m78}
	that the capacity
	of $s$-separable  binary codes for the $disj$--MAC is
	$C^{(disj)}(s)=\ln2/s$ and 
	the maximum in the right-hand
	side of~\req{f-up} is attained at the distribution~$\p$
	with probabilities  $\p(0)=2^{-1/s}$ and~$\p(1)=1-2^{-1/s}$. The significant results, improving the  corresponding  entropy
	$R^{(disj)}(s)\le \ln2/s$,
	were obtained in~\cite{cop98}  for $s=2$ and in~\cite{d14} for~$s\ge11$.
	In addition, we refer to the best known asymptotic $(s\to\infty)$ lower~\cite{d03} and
	upper~\cite{d14} bounds on the rate $R^{(disj)}(s)$:
	$$
	\frac{2(\ln2)^2}{s^2}(1+o(1))\le R^{(disj)}(s)\le \frac{4\ln s}{s^2}(1+o(1)).
	$$
	where the lower bound is based on Proposition~\ref{Lower tau(f)} formulated in Appendix.
	
	\subsection{Bounds on the Rate $R^{(eras)}(s,q)$ for the Erasure MAC}\label{ERASSECT}
	If $q=2$ and $s\to\infty$, then it is not difficult to establish~\cite{r89}
	that the capacity  of separable  $(s,2)$-codes for the $eras$--MAC is
	$C^{(eras)}(s,2)\sim \ln2/s$ and the maximum in the right-hand
	side of~\req{f-up} is asymptotically attained at distribution~$\p$
	with $\p(1)\sim\ln2/s$ or 
	with $\p(0)\sim\ln2/s$.
	In addition, we mention the best known asymptotic $(s\to\infty)$  lower~\cite{sh16} and
	upper~\cite{d03} bounds on the rate $R^{(eras)}(s,2)$:
	$$
	\frac{2(\ln2)^2}{s^2}(1+o(1))\le R^{(eras)}(s,2)\le \frac{4\ln s}{s^2}(1+o(1)).
	$$
	
	{\bf Open Problem}. In the general case $s\ge2$ and $q\ge2$, 
	we conjecture that the capacity $C^{(eras)}(s,q)$ of  the $eras$--MAC does not depend on $q\ge2$, i.e.,
	$C^{(eras)}(s,q)=C^{(eras)}(s,2)$.

	\if0
	{\bf Open Problem 2}. For fixed $q\ge2$ and $s\to\infty$ (for fixed $s\ge2$ and $q\to\infty$),
	to discuss the corresponding asymptotic lower and upper bounds
	on the rate $R^{(eras)}(s,q)$ for the erasure~MAC.
	\fi
	
	\subsection{Bounds on the Rate $R^{(\ell-thr)}(s)$ for the Threshold MAC}\label{THRSECT}
	The best known asymptotic $(\ell\ge 2$ is fixed and $s\to\infty)$  lower and
	upper bounds on the rate $R^{(\ell-thr)}(s)$ were presented in~\cite{d13,b06}:
	$$
	\frac{\ell^\ell e^{-2\ell}}{(\ell-1)!2^{\ell+1}s^2}(1+o(1))\le R^{(\ell-thr)}(s)\le \frac{2\ell^2\ln s}{s^2}(1+o(1)).
	$$
	\subsection{Bounds on the Rate $R^{(A)}(s)$ for the $A$--MAC}\label{EntropyAChannel}
	For fixed $q$ and $s\to\infty$, the best known upper bounds on the rate $R^{(A)}(s)$ are based on the upper bound for $R^{(disj)}(s)$ and improve the entropy bound. The asymptotic ($s\to\infty$) lower and upper bounds were established in~\cite{d2017cover, ge14}
	$$
	\frac{q-1 }{e\log_2 q} \frac{1}{s^2}(1+o(1))\le 	R^{(A)}(s)\le \frac{2(q-1)}{\log_2 q}\frac{\ln s}{s^2}(1+o(1)).
	$$
	\subsection{Bounds on the Rate $R^{(B)}(s)$ for the $B$--MAC}\label{EntropyBChannel}
	For fixed $q$ and $s\to\infty$, the best known lower and upper bounds on the rate $R^{(B)}(s)$ were given in~\cite{djackov1975search,d1981coding} (case $q=2$) and in~\cite{ep16} (case $q>2$)
	$$
\frac{(q-1) \ln s}{4s}(1+o(1))\le 	R^{(B)}(s)\le \frac{(q-1) \ln s}{2s}(1+o(1)).
	$$
	It is worth to note that the upper bound is actually the entropy bound, and it is quite interesting and challenging to improve it.
	\subsection{Combinatorial  Upper Bound for the  Symmetric MAC}\label{CUB}
	In the following theorem, we establish a combinatorial upper bound on the rate of $s$-separable $q$-ary codes for any symmetric $f$--MAC.
	\begin{theorem}\label{th::upperBoundB}
		For any symmetric  $f$--MAC and integers $s\ge 2$ and $q\ge 2$, the rate 
		\begin{equation}\label{upperBound}
		R^{(f)}(s,q)\overset{(a)}{\leq} R^{(B)}(s,q)\leq
		\begin{cases}
		\frac{s+1}{2s}\ln q,\quad \text{if $s$ is odd}.\\
		\frac{s+2}{2(s+1)}\ln q,\quad \text{if $s$ is even}.\\
		\end{cases}
		\end{equation}
	\end{theorem}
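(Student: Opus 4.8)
For inequality $(a)$ the plan is short and uses Remark~\ref{rem1}. Since the $f$--MAC is symmetric, $f$ factors through the type: there is a map $g$ with $f(x_1^s)=g\bigl(T(x_1^s)\bigr)$, so the output sequence $\z^{(f)}(\mathbf{e},X)$ is a coordinatewise function of the type-sequence $T(\x(\mathbf{e}))$. Consequently, if two messages have equal $B$--MAC outputs $T(\x(\mathbf{e}))=T(\x(\mathbf{e}'))$, they have equal $f$--MAC outputs; taking the contrapositive, distinct $f$--outputs force distinct type-sequences. Hence every code that is $s$-separable for the $f$--MAC is automatically $s$-separable for the $B$--MAC, which gives $t^{(f)}(s,q,N)\le t^{(B)}(s,q,N)$ for every $N$ and therefore $R^{(f)}(s,q)\le R^{(B)}(s,q)$ after dividing by $N$ and letting $N\to\infty$.

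The substantive part is the upper bound on $R^{(B)}(s,q)$. By Definition~\ref{Signature} it is equivalent to the statement that the $\binom{t}{s}$ type-sequences $T(\x(\mathbf{e}))$, $\mathbf{e}\in\binom{[t]}{s}$, are pairwise distinct, and, after taking $\tfrac1N\ln(\cdot)$ with $N\to\infty$, the target bound is equivalent to the finite inequality $t^{2s}\le q^{(s+1)N(1+o(1))}$ for odd $s$ (and $t^{2(s+1)}\le q^{(s+2)N(1+o(1))}$ for even $s$). I first note what does \emph{not} suffice: the total number of achievable type-sequences is at most $\binom{q+s-1}{s}^{N}$, and this naive count only reproduces the entropy/capacity bound $R^{(B)}(s,q)\le C^{(B)}(s,q)\le\ln q\,(1+o(1))$ of Proposition~\ref{entropy bound}. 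Thus the theorem must genuinely exploit combinatorial structure beyond counting all type-sequences, and one should keep in mind that the bound only improves on the entropy bound in the regime of fixed $s$ and large $q$.

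The mechanism I would use to produce the decisive factor of $\tfrac12$ is a \emph{balanced-swap} argument. A single differing codeword between two overlapping messages imposes no constraint (the types then differ iff the two codewords differ, which is automatic), whereas a balanced exchange does: two messages sharing a common core are confusable precisely when their differing parts carry identical symbol-multisets in every coordinate, i.e.\ a Sidon-type / distinct-sumset condition. This is transparent for $q=2$, where the type reduces to the integer count $\sum_{j\in\mathbf{e}}x_i(j)$, so $B$-separability says exactly that the sum-vectors $\sum_{j\in\mathbf{e}}\x(j)\in\{0,\dots,s\}^N$ are pairwise distinct, and $\tfrac{s+1}{2s}\ln2$ is the corresponding distinct-subset-sum extremal bound; for general $q$ the single sum is replaced by the full composition. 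Concretely I would split the $s$ positions of each message into near-halves of sizes $\lceil s/2\rceil$ and $\lfloor s/2\rfloor$, count configurations of pairs of messages, and build a reconstruction map from such configurations into tuples of about $\tfrac{s+1}{2}$ coordinate-vectors in $\A_q^N$, the balanced split being exactly what makes $\tfrac12$ rather than $1$ appear, and the parity of $s$ accounting for $\lceil\cdot\rceil$ versus $\lfloor\cdot\rfloor$ in the two cases.

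The hard part will be pinning down this factor $\tfrac12$ together with the exact constant in each parity, i.e.\ proving that although the type-sequence involves all $s$ codewords it constrains the code as if it encoded only about $\tfrac{s+1}{2}$ independent coordinate-vectors, and bounding the multiplicity (the fibers) of the reconstruction map by a subexponential factor $\exp(o(N))$ so that the counting yields $t^{2s}\le q^{(s+1)N(1+o(1))}$ and its even-$s$ analogue. A secondary but real difficulty is that the theorem is asserted for all $q\ge2$: the clean large-$q$ heuristic (that types are generically $s$-subsets of $\A_q$) is unavailable, so the core inequality must be argued purely combinatorially and uniformly in $q$, which is the technically delicate step.
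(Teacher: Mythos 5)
Your argument for inequality $(a)$ is correct and is exactly the paper's: a symmetric $f$ factors through the type, so $f$-separability implies $B$-separability. Your reduction of $B$-separability to the ``balanced-swap'' condition is also correct (types add across a common core, so two messages are confusable iff their disjoint differing parts have identical coordinate-wise symbol multisets). But the substantive part of the theorem --- turning that condition into the exponent $\tfrac{s+1}{2s}$ (odd $s$) or $\tfrac{s+2}{2(s+1)}$ (even $s$) --- is precisely the step you defer to ``the hard part,'' and what you sketch for it (an injection from pair-of-message configurations into tuples of roughly $\tfrac{s+1}{2}$ vectors of $\A_q^N$, with subexponential fibers) is not known to exist and is unlikely to: such an elementary fiber-counting argument would amount to an injective proof of a Bondy--Simonovits-type even-cycle bound, for which no simple counting proof is known. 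So the proposal has a genuine gap at its decisive step.

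The paper closes this gap by changing the decomposition: it splits not the $s$ positions of a message but the $N$ coordinates of each codeword into two blocks of lengths $\alpha N$ and $(1-\alpha)N$, and forms a bipartite graph whose two vertex classes are the $q^{\alpha N}$ and $q^{(1-\alpha)N}$ possible blocks and whose $t$ edges are the codewords. Your swap idea then enters in graph form: if this graph contained a simple cycle of length $2\ell\le 2s$, the odd edges $E_1$ and even edges $E_2$ of the cycle cover the same vertices, so for any common core $\mathcal{S}$ of size $s-\ell$ the messages $E_1\cup\mathcal{S}$ and $E_2\cup\mathcal{S}$ have equal type-sequences, contradicting separability. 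The quantitative factor $\tfrac12$ then comes from an external extremal-graph-theory ingredient (cited to Naor--Verstra\"ete): a bipartite graph with parts of sizes $m,n$ and no cycle of length $\le 2s$ has at most $(2s-3)\bigl[(mn)^{\frac{s+1}{2s}}+m+n\bigr]$ edges for odd $s$, and at most $(2s-3)\bigl[m^{\frac{s+2}{2s}}n^{1/2}+m+n\bigr]$ for even $s$; choosing $\alpha=\tfrac12$ for odd $s$ and optimizing $\alpha=\tfrac{s}{2(s+1)}$ for even $s$ gives the two constants. Note also that the parity split in the theorem arises from the two forms of this extremal bound and the optimization over $\alpha$, not from $\lceil s/2\rceil$ versus $\lfloor s/2\rfloor$ partitions of the message as you conjectured. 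To repair your proof you would need either to import this extremal bound (after building the bipartite graph and proving the cycle-freeness claim), or to supply a genuinely new elementary argument of comparable strength --- the latter being a well-known open difficulty rather than a routine technicality.
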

	Observe that inequality $(a)$ is evidently implied by Remark~\ref{rem1}. Indeed, a separable code for any symmetric $f$--MAC is also a separable code for the $B$--MAC.
	The maximal output entropy for the $B$--MAC was established in~\cite{mp78}, and it is known~\cite{cw81} that
	the capacity of $s$-separable $q$-ary codes for the $B$--MAC is
	$$
	C^{(B)}(s,q)=\sum_{s_0+\ldots+s_{q-1}=s}\frac{s!}{s_0!\ldots s_{q-1}!}\frac{1}{q^s}\ln\left(\frac{s_0!\ldots s_{q-1}!}{s!/q^s}\right).
	$$
	Therefore, $C^{(B)}(s,q)\sim \ln q$ as $q\to\infty$, and Theorem~\ref{th::upperBoundB} improves the entropy upper bound~\req{f-up} for the $B$--MAC.
	\begin{proof}[Proof of Theorem~\ref{th::upperBoundB}]
		Fix an arbitrary $q$-ary  $(N\times t)$-code~$X$.
		For any $\alpha$, $0 < \alpha < 1$, without loss of generality, we  may assume that all 
		codewords from $X$ are distinct	and the length $N$ can be represented as a sum of two integers $\alpha N$ and $(1-\alpha)N$. 
		Given $X$, introduce the bipartite graph
		$$
		G=G(X)=(V,E)\eq(V_1\cup V_2, E),\quad |V_1|=q^{\alpha N},\, |V_2|=q^{(1-\alpha) N},
		$$
		defined as follows.
		Let the vertices  in $V_1$ and $V_2$ correspond to
		distinct  $q$-ary vectors of length $\alpha N$ and $(1-\alpha) N$, respectively.
		Two vertices $v_1\in V_1$ and $v_2\in V_2$ are connected with an edge if and only if
		the code $X$ contains a codeword of  length $N=\alpha N + (1-\alpha)N$ which  is    
		the concatenation of two $q$-ary vectors corresponding to $v_1$ and~$v_2$.
		Thus, we obtain the graph $G(X)$ having $|V|=q^{(1-\alpha)N} + q^{\alpha N}$ vertices and $t$ edges, identified
		by the indexes~$[t]$ of the code $X$. In addition,
		any message ${\bf e}\in {[t] \choose s}$ is interpreted as a non-ordered $s$-collection
		of edges.
		
		Let $X$ be  a $q$-ary $s$-separable code for the $f$-MAC. Suppose, seeking a contradiction, that there exists a simple cycle $C_{2\ell}$ of length~$2\ell\le 2s$ in $G(X)$. Enumerate edges in $C_{2\ell}$ by $e_1,\ldots, e_{2\ell}$, where $e_i$ and $e_{i+1}$ are adjacent for any $i\in[2\ell-1]$ ($e_1$ and $e_{2\ell}$ are also adjacent).  Define the set $E_1$ as $\{e_1,e_3,\ldots, e_{2\ell-1}\}$, and let $E_2$ be the remaining edges of the cycle.
		Consider an arbitrary subset ${\cal S}\subset[t]\setminus\{E_1\cup E_2\}$ of the size $|{\cal S}|=s-\l$
		and define two messages $\e_i\eq E_i\cup{\cal S}\in{[t] \choose s}$, $i=1,2$.
		It is easy to check that
		outputs of the symmetric $f$-MAC for these messages 
		are the same, i.e.,
		$\z^{(f)}({\bf e}_1, X)=\z^{(f)}(\e_2, X)$.
		This contradicts to   Definition~\ref{Signature}.
		
		It is known (e.g., see \cite{nv05}) that if  a bipartite graph with two parts of sizes $n$ and $m$ does not
		contain  any simple cycle of length $\le2s$,
		then the number $t$ of its edges is 
		$$
		t\le \begin{cases}
		(2s-3)\left[(mn)^{\frac{s+1}{2s}} + m + n\right],\quad\text{if $s$ is odd}, \\
		(2s-3)\left[m^{\frac{s+2}{2s}}n^{1/2} + m + n\right],\quad\text{if $s$ is even}.
		\end{cases}
		$$
		For odd $s$, we obtain 
		$$
		t\le (2s-3)\left[q^{N\frac{s+1}{2s}}+q^{\alpha N}+q^{(1-\alpha)N}\right]\le 3(2s-3)q^{N\max\left\{\frac{s+1}{2s},\alpha, (1-\alpha)\right\}}
		$$
		Taking $\alpha=1/2$, we derive 
		$$
		t\le 3(2s-3)q^{\frac{s+1}{2s}N},
		$$
		and the rate~\req{Rsq} is upper bounded as in~\req{upperBound}.	Applying the second inequality for even $s$, we have
		$$
		t\le(2s-3)\left[q^{\frac{N}{2}(1+\frac{2\alpha}{s})} + q^{\alpha N} + q^{(1-\alpha)N}\right]\le 3(2s-3)q^{N\max\left\{\frac{s+2\alpha}{2s},\alpha, 1-\alpha\right\}}.
		$$
		Taking $\alpha$ as a root of inequality $\frac{s+2\alpha}{2s}=1-\alpha$, i.e., $\alpha = \frac{s}{2(s+1)}$, we obtain
		$$
		t\leq 3(2s-3)q^{\frac{s+2}{2(s+1)}N},
		$$
		i.e., the rate~\req{Rsq} 
		satisfies~\req{upperBound}.
	\end{proof}

	\section{Asymptotic Random Coding  Bounds for \\ the  $A$--MAC  and  the $B$--MAC}
	\label{Random A and B}
	In this section, we apply the probabilistic method  to construct asymptotic lower bounds
	on the rate of $s$-separable  $q$-ary codes for the $A$--MAC and the $B$--MAC. 
	\subsection{Random Coding Lower Bound on  $R^{(B)}(s,q)$} 
	\label{COMP}
	An asymptotic ($q\to\infty$)  random coding lower bound
	on the rate of $s$-separable  $q$-ary codes
	for the $B$--MAC is given by
	
	\begin{theorem}\label{th::lowerBoundB}
		If $s\ge2$ is fixed and   $q\to\infty$, then
		the rate $R^{(B)}(s,q)$
		satisfies the asymptotic inequality
		$$
		R^{(B)}(s,q)\ge \frac{s}{2s-1}\ln q\,(1+o(1)).
		$$
	\end{theorem}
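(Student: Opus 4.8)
The plan is to prove this lower bound by the probabilistic (random coding) method together with an expurgation that is carried out at the level of ``collision configurations'' rather than at the level of colliding message pairs. Fix $\beta<\frac{s}{2s-1}$, put $t=\lceil q^{\beta N}\rceil$, and draw the $t$ codewords uniformly at random among the $q^N$ \emph{distinct} vectors of $\A_q^N$. The starting point is the observation that, for two distinct messages $\e_1,\e_2\in\binom{[t]}{s}$, separability is decided only by their symmetric difference: writing $C=\e_1\cap\e_2$, $A=\e_1\setminus\e_2$, $B=\e_2\setminus\e_1$ with $|A|=|B|=\ell$, the compositions add coordinatewise, so $\z^{(B)}(\e_1,X)=\z^{(B)}(\e_2,X)$ holds if and only if the type of $\bigl(x_i(j)\bigr)_{j\in A}$ equals the type of $\bigl(x_i(j)\bigr)_{j\in B}$ for every $i\in[N]$ --- a condition involving only the $2\ell$ columns of $A\cup B$, and not the common part $C$.

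Accordingly, I would call an unordered pair of disjoint $\ell$-subsets $(A,B)$ a \emph{bad $\ell$-configuration} when those $2\ell$ columns match in type at every coordinate. The relevant single-coordinate quantity is the collision probability $P_\ell=\sum_{\tau}p_\tau^2$, where $p_\tau$ is the probability that $\ell$ i.i.d.\ uniform symbols have type $\tau$; the sum is dominated by the all-distinct types, giving $P_\ell=\frac{\ell!}{q^\ell}(1+o(1))$ and, in any case, $P_\ell\le \ell!\,q^{-\ell}$ (the lower bound $P_\ell\ge1/\binom{q+\ell-1}{\ell}$ shows the exponent $q^{-\ell}$ cannot be improved by any input distribution, which is why the uniform one is used). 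By exchangeability, any $2\ell$ of the $t$ distinct codewords are uniform over distinct $2\ell$-tuples, so a fixed configuration is bad with probability at most $P_\ell^N(1+o(1))$, and the expected number of bad $\ell$-configurations is $\Ex[M_\ell]\le t^{2\ell}P_\ell^N(1+o(1))$. The case $\ell=1$ contributes nothing, since a bad $1$-configuration would mean two equal codewords.

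The expurgation step is where the strength of the bound comes from: delete one column from each bad configuration. Because a pair collides only through its symmetric-difference columns $A\cup B$, removing a single column of a bad configuration destroys \emph{every} colliding pair built on it, for all choices of the common part $C$; hence the surviving code has no colliding pair and is $s$-separable. The number of deleted columns is at most $\sum_{\ell=2}^{s}M_\ell$, so it suffices to have $\sum_{\ell=2}^{s}\Ex[M_\ell]\le t/2$, i.e.\ $t^{2\ell-1}P_\ell^N$ bounded for each $\ell$, i.e.\ $\beta(2\ell-1)<\ell$. Since $\frac{\ell}{2\ell-1}$ is \emph{decreasing} in $\ell$, the binding constraint is the fully disjoint case $\ell=s$, which yields $\beta<\frac{s}{2s-1}$; the factors $(\ell!)^N=q^{o(N)}$ are absorbed into the $(1+o(1))$ as $q\to\infty$. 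This produces an $s$-separable code of size $\ge t/2$ and hence rate $\ge\beta\ln q$, and letting $\beta\uparrow\frac{s}{2s-1}$ gives the claim.

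The main obstacle --- and the whole point of the argument --- is getting the bookkeeping right. The naive expurgation that removes one codeword per colliding \emph{pair} of messages overcounts badly: a single bad set of $2\ell$ columns spawns $\sim t^{s-\ell}$ pairs, one for each common part $C$, so the relevant count becomes $t^{s+\ell}P_\ell^N$, the binding case shifts to $\ell=2$, and one only recovers the weaker rate $\frac{2}{s+1}\ln q$ (which is exactly the $A$--MAC lower bound of Theorem~\ref{th::lowerBoundA}). Recognizing that a collision is governed solely by the $2\ell$ symmetric-difference columns --- so that deleting one column kills an entire family of message pairs at once --- is what moves the bottleneck from $\ell=2$ to $\ell=s$ and produces the stronger exponent $\frac{s}{2s-1}$.
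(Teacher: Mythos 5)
Your proof is correct and follows essentially the same route as the paper's: a uniform random code, the key observation that a $B$--MAC collision depends only on the $2\ell$ symmetric-difference columns (exactly what gives the paper its $t^{2m-1}$ count when bounding the bad-event probability $\Pr\{B_j\}$), the per-coordinate collision bound $\ell!/q^{\ell}$, and an expurgation leaving at least $t/2$ columns, with the binding constraint at $\ell=s$ yielding $\frac{s}{2s-1}\ln q$. The only differences are bookkeeping: you delete one column per bad configuration and sample codewords without replacement, whereas the paper deletes every column $j$ whose bad event $B_j$ occurs and uses i.i.d.\ entries, handling the $\ell=1$ case via the term $t\,q^{-N}$ rather than by distinctness.
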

	\begin{proof}[Proof of Theorem~\ref{th::lowerBoundB}] Consider the ensemble  of 
		matrices $X=(x_i(j))$,
		where entries $x_i(j)$, $i\in [N]$, $j\in[t]$, are  chosen independently and
		equiprobable  from the alphabet~$\A_q$. Define a \textit{bad} event $B_{j}$: ``there exist two distinct messages $\e\neq\hat\e$ from ${[t] \choose s}$ so that $j\in\e$, $j\not\in\hat\e$  and $T(\x(\e))=T(\x(\hat\e))$'', where the matrix $T(\cdot)$ is defined by~\eqref{bch}. To establish the existence of an $s$-separable $q$-ary code for the $B$--MAC, we shall upper bound the probability of the bad event
		\begin{align*}
		\Pr\{B_{j}\} =&\, \Pr\left\{\bigcup\limits_{\substack{\e,\hat\e\in{[t] \choose s}\\j\in\e,j\not\in\hat\e}} T(\x(\e))=T(\x(\hat\e))\right\}\le s \max_{m\in[s]} \Pr\left\{\bigcup\limits_{\substack{\e,\hat\e\in{[t] \choose s},\,j\in\e\\|\e\cap\hat\e|=s-m,\,j\not\in\hat\e}} T(\x(\e))=T(\x(\hat\e))\right\}\\
		\le&\,s\max_{m\in[s]} t^{2m-1}\Pr\left\{\underset{\substack{\e,\hat\e\in{[t] \choose s}\\|\e\cap\hat\e|=s-m}}{T(\x(\e))=T(\x(\hat\e))}\right\}=s\max_{m\in[s]} t^{2m-1}\left(\Pr\{T(u_1,\ldots,u_m)=T(v_1,\ldots,v_m)\}\right)^N,
		\end{align*}
		where the first and the second inequalities are evident consequences of the union bound, and $\{u_i, v_i\}|_{i=1}^m$ are independent random variables having the uniform distribution on the set $\A_q$.  Let us estimate the probability that two random $m$-tuples have the same type
		$$
		\Pr\left\{T\(u_1^m\)=T\(v_1^m\)\right\}
		=\Pr\left\{\bigcup_{\pi\in S_m}\left[\bigcap_{k=1}^m \left(u_k = v_{\pi(i)}\right)\right]\right\}
		\leq m!\cdot
		\Pr\left\{\bigcap_{k=1}^m \left(u_k = v_{\pi(k)}\right)\right\}=\frac{m!}{q^m}. 
		$$
		Therefore, 
		$$
		\Pr\{B_{j}\}\le s\max\limits_{m\in [s]}\left[t^{2m-1} (m!/q^m)^N\right].
		$$
		Since $\Pr\{B_{j}\}$ does not depend on $j\in[t]$, we deduce that if the upper bound given above is less than $1/2$, then there exists an $s$-separable $q$-ary code for the $B$--MAC of size $t/2$ and length $N$. Thus, the lower bound on $R^{(B)}(s,q)$ is as follows
		$$
		R^{(B)}(s,q)\ge\min\limits_{m\in[s]}\,\left[\frac{m\ln q-\ln m!}{2m-1}\right].
		$$
		This leads to the statement of Theorem~\ref{th::lowerBoundB}.
	\end{proof}

	\subsection{ Random Coding Lower Bound on  $R^{(A)}(s,q)$} 
	\label{JOIN}
	
	Now we establish  an asymptotic 
	random coding lower bound
	on the rate of $s$-separable $q$-ary codes
	for the $A$--MAC which is presented by
	\begin{theorem}\label{th::lowerBoundA}
		If $s\ge2$ is fixed and   $q\to\infty$, then
		the rate $R^{(A)}(s,q)$
		satisfies the asymptotic inequality
		$$
		R^{(A)}(s,q)\ge \frac{2}{s+1}\ln q\,(1+o(1)).
		$$
	\end{theorem}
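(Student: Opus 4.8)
The plan is to follow the random-coding and expurgation scheme of Theorem~\ref{th::lowerBoundB}, while confronting the one genuinely new feature of the $A$--MAC: its output is the union rather than the type, and unions are not cancellative, so the codewords \emph{shared} by two messages cannot be discarded. I would again draw the entries $x_i(j)$, $i\in[N]$, $j\in[t]$, independently and uniformly from $\A_q$ and bound $\Pr\{B_j\}$, where $B_j$ is the event that some $\e\ni j$ and $\hat\e\not\ni j$ in ${[t]\choose s}$ satisfy $U(\x(\e))=U(\x(\hat\e))$. Since $\Pr\{B_j\}$ does not depend on $j$, showing $\Pr\{B_j\}<1/2$ produces an $s$-separable $q$-ary code for the $A$--MAC of size $t/2$. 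Writing $A=\e\cap\hat\e$, $M=\e\setminus\hat\e\ni j$, $\hat M=\hat\e\setminus\e$ with $|M|=|\hat M|=m$, the collision is equivalent to the coordinatewise condition $U_i(M)\triangle U_i(\hat M)\subseteq U_i(A)$ for every $i\in[N]$, where $U_i(S)$ denotes the set of symbols occurring at coordinate $i$ among the columns indexed by $S$.

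I would split each coordinate into two types: \emph{matched} coordinates, where $U_i(M)=U_i(\hat M)$ and the shared part $A$ is irrelevant, and \emph{covered} coordinates, where a nonempty discrepancy $U_i(M)\triangle U_i(\hat M)$ is absorbed into $U_i(A)$. A collision realized entirely by matched coordinates does not depend on $A$, so, exactly as for the $B$--MAC, one pays only $t^{2m-1}$ for the choice of $(M,\hat M)$ against a per-coordinate probability $\sim m!/q^{m}$; this branch is controlled whenever $R<\min_{m\in[s]}\frac{m\ln q}{2m-1}=\frac{s}{2s-1}\ln q$, and since $(s-1)(s-2)\ge0$ gives $\frac{s}{2s-1}\ge\frac{2}{s+1}$, the matched branch is never the bottleneck (in particular, for $s=2$ the covered branch below is vacuous and this matched branch at $m=2$ already yields the sharp $\frac{2}{3}\ln q=\frac{2}{s+1}\ln q$).

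The binding branch is $m=1$: two messages $\e=A\cup\{j\}$ and $\hat\e=A\cup\{j'\}$ differing in a single codeword, with $|A|=s-1$. A collision now forces, in every coordinate with $x_i(j)\neq x_i(j')$, that both symbols $x_i(j),x_i(j')$ occur among the $s-1$ columns of $A$. Setting $\mathcal C=\{i:x_i(j)\neq x_i(j')\}$, the task is to bound $\Pr\{\exists A:\ \{x_i(j),x_i(j')\}\subseteq U_i(A)\ \forall i\in\mathcal C\}$. The crucial observation is that a naive union bound over the $\binom{t}{s-1}$ choices of $A$ is hopelessly lossy: its mass concentrates on pairs $(j,j')$ that nearly coincide, merely reproducing the trivial ``identical columns'' event and giving the far too weak bound $\tfrac1s\ln q$. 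Instead I would use the capped bound $\Pr\{\exists A\}\le\min\bigl(1,\ t^{\,s-1}(c_s/q^{2})^{|\mathcal C|}\bigr)$ with $c_s\asymp(s-1)(s-2)$, and average over $|\mathcal C|\sim\mathrm{Bin}(N,1-1/q)$. Balancing the covering threshold $\ell_0$, defined by $t^{\,s-1}(c_s/q^{2})^{\ell_0}\asymp1$, against the large-deviation cost $q^{-(N-\ell_0)}$ of the atypical event $\{|\mathcal C|<\ell_0\}$ (each of the $N-\ell_0$ agreement coordinates contributing a factor $1/q$), and multiplying by the $t$ choices of $j'$, leads to $t\,q^{-(N-\ell_0)}<1$ with $\ell_0=\frac{s-1}{s+1}N$, hence precisely $R<\frac{2}{s+1}\ln q$.

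I expect the main obstacle to be exactly this last step: quantifying the existence of a masking set $A$ without the lossy union bound, i.e. proving that the dominant mechanism for a collision is that $j$ and $j'$ agree on all but a $\frac{2}{s+1}$-fraction of coordinates, few enough that $s-1$ auxiliary columns can plausibly cover the residual discrepancies. A secondary technical point is to confirm that the covered branch with $m\ge2$ never undercuts $m=1$; intuitively both a larger masking set (smaller $m$) and cheaper agreement coordinates favour $m=1$, so I would check that the analogous threshold balance produces an exponent at least $\frac{2}{s+1}$ for each $m\ge2$, leaving $m=1$ as the single binding constraint and thus $R^{(A)}(s,q)\ge\frac{2}{s+1}\ln q\,(1+o(1))$.
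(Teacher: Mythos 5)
Your proposal follows the paper's own strategy, and for the case that actually determines the bound it is essentially the paper's proof. The paper also draws the entries i.i.d.\ uniform, defines the same bad event, splits according to $m=|\e\setminus\hat\e|$, and gives the case $m=1$ a special two-regime treatment: it introduces a parameter $a$, bounds the event $\{U(\x(\e))=U(\x(\hat\e)),\,d_H(\x(j),\x(\hat j))\ge aN\}$ by a union bound over the $t^{s}$ choices of $\hat j$ and the masking set, with per-disagreement-coordinate cost $(s-1)^2/q^2$, bounds the complementary event by $t\Pr\{d_H<aN\}<t\,2^N q^{-(1-a)N}$, and optimizes $a$ via $\frac{1+a}{s}=1-a$, i.e.\ $a=\frac{s-1}{s+1}$. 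Your capped union bound $\min\bigl(1,\,t^{s-1}(c_s/q^2)^{|\mathcal C|}\bigr)$ averaged over $|\mathcal C|\sim\mathrm{Bin}(N,1-1/q)$, with threshold $\ell_0=\frac{s-1}{s+1}N$, is exactly this computation in a different presentation (your $\ell_0/N$ is the paper's optimal $a$; your constant $(s-1)(s-2)$ versus the paper's $(s-1)^2$ is immaterial under $(1+o(1))$). Your diagnosis that the uncapped union bound only yields $\frac{1}{s}\ln q$ is also stated verbatim as a remark in the paper.

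The one place where you diverge — and where your write-up is not yet a proof — is $m\ge2$. You dispose only of the all-matched sub-event (rate $\frac{m\ln q-\ln m!}{2m-1}$, fine since $\frac{s}{2s-1}\ge\frac{2}{s+1}$), and you defer the mixed matched/covered sub-event, where part of the discrepancy $U_i(M)\,\triangle\,U_i(\hat M)$ is absorbed by $U_i(A)$, to a ``check'' supported only by intuition. The paper shows this entire case needs no matched/covered decomposition at all: union-bound over all $s+m-1$ columns of $(\e\cup\hat\e)\setminus\{j\}$ and use the containment estimate, per coordinate,
$$
\Pr\left\{\bigcup_{k=1}^{s}\xi_k=\bigcup_{j=m+1}^{m+s}\xi_j\right\}
\le\Pr\left\{\bigcup_{k=1}^{m}\xi_k\subset\bigcup_{i=m+1}^{m+s}\xi_i\right\}
\le\left(\frac{s}{q}\right)^{m},
$$
which yields the rate condition $R<\frac{m}{s+m-1}\ln q\,(1+o(1))$; since $\frac{m}{s+m-1}$ is increasing in $m$, this is $\ge\frac{2}{s+1}\ln q$ for every $m\ge2$, so $m\in\{1,2\}$ are jointly binding and the theorem follows. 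Your deferred claim is in fact true (one can verify it either by your threshold balance, configuration by configuration, or simply by the display above), so the gap is closable; but as written, the $m\ge2$ covered branch is asserted rather than proved, and the paper's one-line bound is the clean way to close it.
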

	%
	%
	%
	%
	%
	\begin{proof}[Proof of Theorem~\ref{th::lowerBoundA}] Consider the ensemble  of 
		matrices $X=(x_i(j))$,
		where entries $x_i(j)$, $i\in [N]$, $j\in[t]$, are  chosen independently and
		equiprobable  from the alphabet~$\A_q$. Define a \textit{bad} event $A_{j}$: ``there exist two distinct messages $\e\neq\hat\e$ from ${[t] \choose s}$ so that $j\in\e$, $j\not\in\hat\e$  and $U(\x(\e))=U(\x(\hat\e))$'', where the vector $U(\cdot)\in\P(\A_q,N)$ is defined by~\eqref{bch}. To establish the existence of an $s$-separable $q$-ary code for the $A$--MAC, we shall upper bound the probability of the bad event
		\begin{align}
		\Pr\{A_{j}\} =&\, \Pr\left\{\bigcup\limits_{\substack{\e,\hat\e\in{[t] \choose s}\\j\in\e,j\not\in\hat\e}} U(\x(\e))=U(\x(\hat\e))\right\}\le s \max_{m\in[s]} \Pr\left\{\bigcup\limits_{\substack{\e,\hat\e\in{[t] \choose s},\,j\in\e\\|\e\cap\hat\e|=s-m,\,j\not\in\hat\e}} U(\x(\e))=U(\x(\hat\e))\right\}\notag\\
		\le&\,s\max\left[\max_{m\in\{2,\ldots, s\}} t^{s+m-1}\Pr\left\{\underset{\substack{\e,\hat\e\in{[t] \choose s}\\|\e\cap\hat\e|=s-m}}{U(\x(\e))=U(\x(\hat\e))}\right\};\,\Pr\left\{\bigcup\limits_{\substack{\e,\hat\e\in{[t] \choose s},\,j\in\e\\|\e\cap\hat\e|=s-1,\,j\not\in\hat\e}} U(\x(\e))=U(\x(\hat\e))\right\}\right],\label{probAbound}
		\end{align}
		where the first and the second inequalities are evident consequences of the union bound. For any $\e,\hat\e\in {[t] \choose s}$, $|\e\cap\hat\e|=s-m$, let us estimate the probability $\Pr\{U(\x(\e))=U(\x(\hat\e))$ as follows 
		\beq{ussb}
		\Pr\left\{\underset{\substack{\e,\hat\e\in{[t] \choose s}\\|\e\cap\hat\e|=s-m}}{U(\x(\e))=U(\x(\hat\e))}\right\} = \prod_{i=1}^N\,\Pr\left\{\underset{\substack{\e,\hat\e\in{[t] \choose s}\\|\e\cap\hat\e|=s-m}}{\bigcup_{k=1}^s\,x_i(e_k)=\bigcup_{j=1}^s\,x_i({\hat{e}_j})}\right\}
		\overset{(b)}{\le} \frac{s^{m N}}{q^{m N}},\quad   m\in [s].
		\eeq
		To prove $(b)$ in the last inequality, we employ the following fact. If $\xi_1,\ldots, \xi_{m+s}$ are independent and distributed uniformly over $\A_q$, then
		$$
		\Pr\left\{\bigcup_{k=1}^s\xi_k = \bigcup_{j=m+1}^{m+s}\xi_j\right\}\le\Pr\left\{\bigcup_{k=1}^{m}\xi_k\subset\bigcup_{i=m+1}^{m+s}\xi_i\right\}\le \left(\Pr\left\{\xi_1\in\bigcup_{i=m+1}^{m+s}\xi_i\right\}\right)^m\le \frac{s^m}{q^m}.
		$$ 
		For the second probability under the maximum in~\ref{probAbound}, we obtain an upper bound in a different way. Let $E_j$ consist of all possible pairs $(\e,\hat\e)$ so that $\e,\hat\e\in{[t] \choose s}$, $j\in\e$, $j\not\in\hat\e$ and $|\e\cap\hat\e|=s-1$.  Since $|\e \cap\hat\e|=s-1$, there exists $\hat j\in[t]$ such that $\e = \{j\}\cup \{\e \cap\hat\e\}$ and $\hat\e = \{\hat j\}\cup \{\e \cap\hat\e\}$.  For a real parameter $a$, $0<a<1$, we represent the event $\{U(\x(\e))=U(\x(\hat\e))\}$ as a disjoint union of two events. For the first one, we additionally require the Hamming distance $d_H(\cdot)$ between $\x(j)$ and $\x(\hat j)$ to be at least $a N$, i.e., $A_j(\e,\hat\e,\ge a)\eq\{U(\x(\e))=U(\x(\hat\e)), d_H(\x(j),\x(\hat j))\ge aN\}$. The remaining one is $A_j(\e,\hat\e,< a)\eq\{U(\x(\e))=U(\x(\hat\e)), d_H(\x(j),\x(\hat j))< aN\}$. Then we deal with each event individually. More concretely, 
		\begin{align*}
		\Pr\left\{\bigcup_{(\e,\hat\e)\in E_j}U(\x(\e))=U(\x(\hat\e)\right\}=&\,\Pr\left\{\bigcup_{(\e,\hat\e)\in E_j}A_j(\e,\hat\e,\ge a)\right\}+\Pr\left\{\bigcup_{(\e,\hat\e)\in E_j}A_j(\e,\hat\e,<a)\right\}\\
		\le&\,t^s \Pr\left\{\underset{(\e,\hat\e)\in E_j}{A_j(\e,\hat\e,\ge a)}\right\} + t\Pr\{d_H(\x(j),\x(\hat j)) < aN\},
		\end{align*}
		where the inequality is implied by the union bound,  and $\hat j\in[t]$, $\hat j\neq j$. Let us estimate the probability that two random $q$-ary vectors of length $N$ have the Hamming distance at most $aN$
		$$
		\Pr\{d_H(\x(j),\x(\hat j)) < aN\}=\sum\limits_{i=N - \lfloor aN\rfloor}^{N} \Pr\{d_H(\x(j),\x(\hat j)) = N - i\}=\sum\limits_{i=N - \lfloor aN\rfloor}^{N}\,{N\choose i}
		\left(\frac{1}{q}\right)^i\left(1-\frac{1}{q}\right)^{N-i}<\frac{2^N}{q^{(1-a)N}}.
		$$
		Now, for any $(\e,\hat\e)\in E_j$, we proceed with the event $A_j(\e,\hat\e,\ge a) = \{U(\x(\e))=U(\x(\hat\e)), d_H(\x(j),\x(\hat j))\ge aN\}$ as follows
		\begin{multline*}
		\Pr\{A_j(\e,\hat\e,\ge a)\} = \sum\limits_{i=0}^{N-\lceil a N\rceil}\Pr\left\{U(\x(\e))=U(\x(\hat\e)) \mid d_H(\x(j),\x(\hat j)) = N - i\right\}\Pr\left\{d_H(\x(j),\x(\hat j)) = N - i\right\} 
		\\
		\overset{(c)}{\le} \sum\limits_{i=0}^{N -\lceil a N\rceil}\,{N\choose i}
		\left(\frac{1}{q}\right)^i\left(1-\frac{1}{q}\right)^{N-i}\,\left(\frac{(s-1)^2}{q^2}\right)^{N-i}
		<\frac{\left(2s^2\right)^N}{q^{(1+a)N}}.
		\end{multline*}
		To prove $(c)$ in the last inequality, we use the following fact. If $\xi_1,\ldots, \xi_{s+1}$ are independent and distributed uniformly over $\A_q$, then
		$$
		\Pr\left\{\bigcup_{k=1}^s\xi_k = \bigcup_{j=2}^{s+1}\xi_j,\, \xi_1\neq \xi_{s+1}\right\}\le \Pr\left\{\xi_1\in \bigcup_{j=2}^{s}\xi_j,\, \xi_{s+1}\in \bigcup_{j=2}^{s}\xi_j\right\}\le\frac{(s-1)^2}{q^2}.
		$$
		Therefore,
		$$
		\Pr\left\{\bigcup_{\{\e,\hat\e\}\in E_j}U(\x(\e))=U(\x(\hat\e))\right\} \le \min_{0<a<1}\,\left[t^s\frac{\left(2s^2\right)^N}{q^{(1+a)N}}+
		t\frac{2^N}{q^{(1-a)N}}\right]
		\le\,2\min_{0<a<1}\,\left\{\max\left[\frac{t^s\left(2s^2\right)^N}{q^{(1+a)N}}\,;
		\,\frac{t\,2^N}{q^{(1-a)N}}\right]\right\}.
		$$
		Finally, summarizing the above arguments, we obtain
		$$
		\Pr\{A_j\}\le 2s\max\left[\max_{m\in\{2,\ldots, s\}}\frac{t^{s+m-1} s^{mN}}{q^{mN}}; \,\min_{0<a<1}\,\left\{\max\left[\frac{t^s\left(2s^2\right)^N}{q^{(1+a)N}}\,;
		\,\frac{t\,2^N}{q^{(1-a)N}}\right]\right\}\right].
		$$
		Since $\Pr\{A_{j}\}$ does not depend on $j\in[t]$, we deduce that if the upper bound given above is less than $1/2$, then there exists an $s$-separable $q$-ary code for the $A$--MAC of size $t/2$ and length $N$. Thus, the asymptotic $(q\to\infty)$ lower bound on $R^{(A)}(s,q)$ is as follows
		$$
		R^{(A)}(s,q)\ge\min\left[\frac{2}{s+1}; \max_{0<a<1}\,\left\{\min\left[\frac{1+a}{s};\,1-a\right]\right\}\right]\,
		\ln q\,(1+o(1)) = \frac{2}{s+1}\ln q\,(1+o(1)).
		$$
	\end{proof}
	\begin{remark}
		It is worth noticing that if we upper bound the probabilities in~\eqref{probAbound} for each $m\in[s]$ with the help of~\eqref{ussb}, then we would get only $R^{(A)}(s,q)\ge \frac{1}{s}\ln q(1+o(1))$ as $q\to\infty$. 
	\end{remark}

	\section{List Decoding Codes for the $A$--MAC}\label{LDC}
	After giving definitions and notations, in Section~\ref{Def and prop}, we derive several useful properties establishing a connection between list-decoding codes for the $A$--MAC and separable codes for the $A$--MAC and a relation between list decoding codes over alphabets of different sizes.  We recall the best known lower bounds on the rate of  list-decoding codes in Section~\ref{list-decoding codes}. Finally, we present a new combinatorial upper bound on the rate of  list-decoding codes in Section~\ref{list-decoding codes}, which also leads to an upper bound on the rate of  separable codes for the $A$--MAC.
	\subsection{Notations and Definitions}\label{Def and prop}
	Recall that $\P(\A_q,N)$ stands for the Cartesian product of $N$ copies of $\P(\A_q)$, where $\P(\A_q)$ is the set of all subsets of $\A_q$. A vector $\Q=(\Q_1,\ldots, \Q_N)^T\in\P(\A_q,N)$ is said to \textit{cover} a column
	$\x = (x_1, \dots, x_N)^T \in \A_q^N$ if $x_i \in \Q_i$ for all $i \in [N]$.
	
	\begin{definition}[\cite{sh16}]
		Given integers $s\ge1$ and  $L\ge1$,
		a $q$-ary code $X$ of size $t$ and length $N$ is said to be a \textit{list-decoding $(s,L, q)$-code}
		of size $t$ and length $N$ 
		if,
		for any $s$-collection of codewords $\{\x(j_1), \dots, \x(j_s)\}$,
		the vector $U(\x(j_1), \ldots,\x(j_s))$, defined by~\eqref{bch}, covers not more than
		$L-1$ other codewords of the code~$X$.
	\end{definition}
	
	In the case $s\ge2$ and $L = 1$, the list-decoding $(s,1,q)$-code  (or $s$-frameproof code~\cite{gg14})
	is
	an $(\le s)$-separable  $q$-ary code for the $A$--MAC. Moreover,
	list-decoding $(s,1, q)$-code provides a simple \textit{factor} decoding algorithm,
	that picks the unknown message $\e=(e_1,\dots,e_s) \in  {[t] \choose s}$ by searching
	all codewords of $X$ covered by the output signal
	\begin{align*}
	\z^{(A)}(\e, X) = U(\x(e_1),\ldots, \x(e_s))
	=\left( \bigcup\limits_{m = 1}^s \, x_1(e_m), \dots, \bigcup\limits_{m = 1}^s \, x_N(e_m) \right)^T.
	\end{align*}
	In the general case $L \ge 1$, the algorithm provides a subset of $[t]$ that contains
	$s$ elements of the message $\e$ and at most $L-1$ extra elements.
	
	Let $t(s,L, q, N)$ be the {\it maximal possible size}
	of list-decoding $(s,L, q)$-codes of length $N$.
	For fixed $s \ge 2$, $L \ge 1$ and $q \ge 2$, define a \textit{rate}
	of list-decoding $(s,L, q)$-codes:
	$$
	R(s,L, q) \eq \varlimsup_{N \to \infty} \frac{\ln t(s,L, q, N)}{N}.
	$$
	
	An important evident connection between $s$-separable  $q$-ary codes for the $A$--MAC
	and  list-decoding $(s,L,q)$-codes is formulated  as
	\begin{proposition}
		\label{signat-list}
		Any $s$-separable  $q$-ary code  for the $A$--MAC
		is a list-decoding $(s-1,2,q)$-code and, therefore,  the rate
		of $s$-separable  $q$-ary code for the $A$--MAC satisfies the inequality
		\beq{signature-list}
		R^{(A)}(s,q)\le R(s-1,2,q),\quad   s\ge2,\quad q\ge2.
		\eeq
	\end{proposition}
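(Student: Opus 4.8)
The plan is to establish the combinatorial containment ``$s$-separable $\Rightarrow$ list-decoding $(s-1,2,q)$'' directly, and then read off the rate inequality from the definitions of $t^{(A)}(s,q,N)$ and $t(s-1,2,q,N)$. For the containment I would argue by contraposition: assuming $X$ fails to be a list-decoding $(s-1,2,q)$-code, I would exhibit two distinct messages in ${[t] \choose s}$ producing the same $A$--MAC output, which contradicts Definition~\ref{Signature}.

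The key observation driving the construction is that covering is precisely the condition under which adjoining a covered codeword leaves the union unchanged. Concretely, if $X$ is not a list-decoding $(s-1,2,q)$-code, then by definition some $(s-1)$-collection $\{\x(j_1),\dots,\x(j_{s-1})\}$ has the property that $U(\x(j_1),\dots,\x(j_{s-1}))$ covers at least $L=2$ codewords lying outside the collection; pick two of them, $\x(a)$ and $\x(b)$, with $a\neq b$ and $a,b\notin\{j_1,\dots,j_{s-1}\}$. Since $\x(a)$ is covered, $x_i(a)\in\bigcup_{k=1}^{s-1}x_i(j_k)$ for every $i\in[N]$, and hence $U(\x(j_1),\dots,\x(j_{s-1}),\x(a))=U(\x(j_1),\dots,\x(j_{s-1}))$; the identical statement holds with $b$ in place of $a$. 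Forming the two $s$-messages $\e_1\eq\{j_1,\dots,j_{s-1},a\}$ and $\e_2\eq\{j_1,\dots,j_{s-1},b\}$, which are distinct because $a\neq b$, I then get $\z^{(A)}(\e_1,X)=U(\x(\e_1))=U(\x(\e_2))=\z^{(A)}(\e_2,X)$, contradicting the $A$--MAC separability of $X$. This forces every $s$-separable code to be a list-decoding $(s-1,2,q)$-code.

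The rate bound is then immediate: the containment gives $t^{(A)}(s,q,N)\le t(s-1,2,q,N)$ for every $N$, so taking logarithms, dividing by $N$, and passing to $\varlimsup_{N\to\infty}$ yields $R^{(A)}(s,q)\le R(s-1,2,q)$. No step here is genuinely difficult; the only point requiring care is the bookkeeping that $a$ and $b$ are distinct \emph{and} lie outside the $(s-1)$-collection, which is exactly what guarantees that $\e_1$ and $\e_2$ are honest, distinct elements of ${[t] \choose s}$. That ``otherness'' in the list-decoding definition is what makes the parameter shift $s\mapsto s-1$ and list size $L=2$ line up correctly with $s$-separability, so I would be sure to invoke it explicitly.
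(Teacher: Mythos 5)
Your proof is correct, and it takes the only natural route: the contrapositive argument that two distinct codewords covered by the union of an $(s-1)$-collection yield two distinct $s$-messages with identical $A$--MAC outputs, plus the monotonicity of maximal sizes $t^{(A)}(s,q,N)\le t(s-1,2,q,N)$. The paper itself gives no explicit proof — it declares the proposition an ``evident'' reformulation of properties of binary list-decoding superimposed codes from the cited literature — so your write-up simply supplies, correctly and with the right care about $a\neq b$ lying outside the $(s-1)$-collection, the details the paper leaves implicit.
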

	Proposition~\ref{signat-list} can be seen as a simple reformulation of the
	corresponding  properties of binary list-decoding superimposed codes
	firstly introduced in~\cite{dr83}. A nontrivial recurrent inequality
	for the rate $R(s,L,q)$ of list-decoding $(s,L,q)$-codes is established by
	
	\begin{proposition}
		\label{lemQDecrease}
		For any integers $q' > q \ge 2$, $s \ge 2$ and $L \ge 1$
		the following inequality holds:
		\beq{qDecrease}
		R(s,L,q) \ge \frac{R(s,L,q')}{\lceil q' / (q - 1) \rceil}.
		\eeq
	\end{proposition}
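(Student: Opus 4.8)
The plan is to reduce the alphabet from $q'$ to $q$ by a symbol-expansion (concatenation) that inflates the length by the factor $k\eq\lceil q'/(q-1)\rceil$ while keeping the code size unchanged and preserving the list-decoding property. Since the rate is $\ln(\text{size})$ divided by length, a list-decoding $(s,L,q')$-code of length $N'$ and size $t$ will produce a list-decoding $(s,L,q)$-code of length $kN'$ and size $t$, and \eqref{qDecrease} will follow by a limiting argument.

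Concretely, first I set $k=\lceil q'/(q-1)\rceil$, so that $k(q-1)\ge q'$, and fix an injection of $\A_{q'}$ into the set of pairs $[k]\times\{1,\dots,q-1\}$, written $a\mapsto(\ell_a,c_a)$. I then define a coordinate map $\phi:\A_{q'}\to\A_q^k$ by letting $\phi(a)$ be the ``one-hot'' block carrying the nonzero symbol $c_a\in\A_q$ in position $\ell_a$ and $0$ in all other $k-1$ positions. Given a list-decoding $(s,L,q')$-code $X'$ with columns $\x(1),\dots,\x(t)$ of length $N'$, I build the $q$-ary code $X$ of length $N=kN'$ by replacing each entry $x_i(j)$ of $X'$ with its block $\phi(x_i(j))$, i.e.\ each column $\x(j)$ becomes the concatenation of the blocks $\phi(x_i(j))$, $i\in[N']$. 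Since $\phi$ is injective, distinct columns stay distinct and the size remains $t$.

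The key step, which I expect to be the crux, is the covering-preservation claim: for any $b,a_1,\dots,a_s\in\A_{q'}$, if the block $\phi(b)$ is covered coordinatewise by the union $U(\phi(a_1),\dots,\phi(a_s))$, then $b\in\{a_1,\dots,a_s\}$. To see this I would inspect only the distinguished position $\ell_b$: there $\phi(b)_{\ell_b}=c_b\neq0$, so covering forces $c_b=\phi(a_i)_{\ell_b}$ for some $i$; a nonzero value in position $\ell_b$ can only originate from a symbol whose nonzero position is $\ell_b$, hence $(\ell_{a_i},c_{a_i})=(\ell_b,c_b)$ and therefore $a_i=b$ by injectivity of the pairing. (The zero positions of $\phi(b)$ require no attention, as covering there is already guaranteed by the hypothesis.) Applying this blockwise over all $N'$ original coordinates shows that whenever a column $\phi(\x(j))$ of $X$ is covered by $U(\phi(\x(j_1)),\dots,\phi(\x(j_s)))$, the original column $\x(j)$ is covered by $U(\x(j_1),\dots,\x(j_s))$ in $X'$.

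Consequently the set of columns of $X$ covered by an arbitrary $s$-collection is contained in the corresponding covered set in $X'$, so the list-decoding bound of at most $L-1$ extra codewords is inherited; thus $X$ is a list-decoding $(s,L,q)$-code. This yields $t(s,L,q,kN')\ge t(s,L,q',N')$ for every $N'$. Since the $\varlimsup$ along the subsequence $N=kN'$ cannot exceed the full $\varlimsup$ defining $R(s,L,q)$, dividing the logarithms by $kN'$ gives $R(s,L,q)\ge R(s,L,q')/k$, which is \eqref{qDecrease}. The only points demanding care are the injectivity and length bookkeeping and the blockwise covering argument; the concluding rate computation is then routine.
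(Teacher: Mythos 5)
Your proof is correct and follows essentially the same route as the paper: the paper also maps each symbol of $\A_{q'}$ injectively to a one-hot $q$-ary block of length $\lceil q'/(q-1)\rceil$ (a code of size $\lceil q'/(q-1)\rceil(q-1)\ge q'$ with exactly one nonzero entry per codeword) and concatenates blockwise. The only difference is that the paper dismisses the covering-preservation step with ``one can easily check,'' whereas you spell it out via the distinguished nonzero coordinate $\ell_b$ --- a detail worth having, but not a different approach.
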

	\begin{proof}[Proof of Proposition~\ref{lemQDecrease}]
		Assume that there exists a list-decoding $(s, L, q')$-code $X'$ of length $N$ and size $t$.
		Let $l \eq \lceil q' / (q - 1) \rceil$. Consider a $q$-ary code $C$ of length $l$ and size $l (q - 1) \ge q'$,
		which is composed from all possible codewords with one nonzero symbol:
		$$
		\begin{array}{|ccccccccccc|}
		1      & 0      & \dots  & 0      & & \dots & & q-1    & 0      & \dots  & 0      \\
		0      & 1      & \dots  & 0      & & \dots & & 0      & q-1    & \dots  & 0      \\
		\vdots & \vdots & \ddots & \vdots & & \dots & & \vdots & \vdots & \ddots & \vdots \\
		0      & 0      & \dots  & 1      & & \dots & & 0      & 0      & \dots  & q-1
		\end{array}
		$$
		Let us consider an injective map $\phi : \A_{q'} \rightarrow C$ such that $\phi(i)$ is the $(i+1)$th codeword of $C$.
		To construct
		a $q$-ary code $X$ of length $l N$ and size $t$, we replace each symbol $a\in\A_{q'}$ in all codewords in $X'$
		by $q$-ary codeword $\phi(a)$. One can easily check that the code $X$
		is a list-decoding $(s, L, q)$-code.
	\end{proof}

	\subsection{ Lower Bound on the rate $R(s,L, q)$}\label{list-decoding codes}
	In~\cite{sh16}, applying Proposition~\ref{lemQDecrease} and random coding arguments, the author established the lower bound on the rate
	of list-decoding $(s,L,q)$-codes which can be formulated as
	\begin{theorem}[{\cite[Theorem $2$]{sh16}}]
		\label{th::lowerBoundForLD}
		\textbf{1.} For any fixed $q \ge 2$, $s \ge 2$ and $L \ge 1$ the following lower bound holds:
		\beq{qLowerR}
		R(s,L, q) \ge \underline{R}(s,L,q) \eq \max \limits_{q' \ge q} \,
		\frac{- \ln P(q', s, L)}{(s + L - 1) k(q, q')},
		\eeq
		where
		\begin{align}
		\label{probExact}
		P(q, s, L) &\eq \sum_{m = 1}^{\min(q, s)} {q \choose m} \( \frac{m}{q} \)^L
		\times \sum_{k = 0}^{m} (-1)^k {m \choose k} \( \frac{m - k}{q} \)^s, \\
		\label{koefQ}
		k(q, q') &\eq
		\begin{cases}
		1, \quad &\text{for} \quad q = q',\\
		\lceil \frac{q'}{q - 1} \rceil, \quad &\text{otherwise}.
		\end{cases}
		\end{align}
		\textbf{2.} For any fixed $q \ge 2$, $L \ge 1$ and $s \to \infty$
		\beq{qLowerRAsS}
		\underline{R}(s,L,q) \ge \frac{L (q-1) (\ln2)^2}{s^2 } (1 + o(1)), \quad s \to \infty.
		\eeq
		\textbf{3.}
		For any fixed $s \ge 2$, $L \ge 1$ and $q\to\infty$, 
		\beq{LB}
		\underline{R}(s,L,q) = \frac{L}{s + L - 1}\; \ln q\; (1+o(1).
		\eeq
	\end{theorem}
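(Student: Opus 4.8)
The plan is to prove the three parts in turn, with Part~1 supplying the core random-coding construction and Parts~2--3 extracting its asymptotics through a judicious choice of the free parameter $q'$. For Part~1, I would fix $q'\ge q$ and draw a random $q'$-ary code by choosing all $M=2t$ entries independently and uniformly from $\A_{q'}$. The decisive single-coordinate computation is that, for independent uniform symbols $\xi_1,\dots,\xi_s$ and fresh independent uniform symbols $\eta_1,\dots,\eta_L$, the probability that every $\eta_j$ lies in $\{\xi_1,\dots,\xi_s\}$ is exactly $P(q',s,L)$ from \eqref{probExact}: summing over the support size $m$ of the $\xi$'s, the factor $\binom{q'}{m}$ counts the supports, the inner sum $\sum_{k=0}^m(-1)^k\binom{m}{k}\left(\frac{m-k}{q'}\right)^s$ is (by inclusion--exclusion) the probability that the $s$ symbols are surjective onto a fixed $m$-set, and $\left(\frac{m}{q'}\right)^L$ is the probability that the $L$ fresh symbols all land in that set. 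Since coordinates are independent, a fixed $s$-subset whose union covers a fixed disjoint $L$-subset is ``bad'' with probability $P(q',s,L)^N$, so the expected number of bad configurations is at most $M^{s+L}P(q',s,L)^N$. Expurgating one codeword per bad configuration, the condition $M^{s+L}P(q',s,L)^N\le M/2$, i.e. $M^{s+L-1}P(q',s,L)^N\le 1/2$, leaves a list-decoding $(s,L,q')$-code of size $\ge t$; the expurgation saves precisely one factor of $M$, producing the denominator $s+L-1$ and the bound $R(s,L,q')\ge -\ln P(q',s,L)/(s+L-1)$. Applying Proposition~\ref{lemQDecrease} to descend from alphabet $q'$ to alphabet $q$ divides by $\lceil q'/(q-1)\rceil=k(q,q')$, and taking the maximum over $q'\ge q$ yields \eqref{qLowerR}.

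For Part~2 (fixed $q,L$, $s\to\infty$) the choice $q'=q$ is far from optimal, so I would let $q'$ grow linearly with $s$ and write $\beta=s/q'$. Because the number of distinct values among $s$ uniform draws concentrates, $P(q',s,L)=\left(1-(1-1/q')^s\right)^L(1+o(1))\to(1-e^{-\beta})^L$, while $\lceil q'/(q-1)\rceil\sim s/(\beta(q-1))$ and $s+L-1\sim s$; substituting into \eqref{qLowerR} gives the rate $\frac{L(q-1)}{s^2}\bigl(-\beta\ln(1-e^{-\beta})\bigr)(1+o(1))$. The function $g(\beta)=-\beta\ln(1-e^{-\beta})$ has $g'(\beta)=-\ln(1-e^{-\beta})-\beta\,e^{-\beta}/(1-e^{-\beta})$, which vanishes at $\beta=\ln2$, where $g(\ln2)=(\ln2)^2$. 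Hence choosing $q'=\lceil s/\ln2\rceil$ delivers the stated bound $\frac{L(q-1)(\ln2)^2}{s^2}(1+o(1))$ in \eqref{qLowerRAsS}.

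For Part~3 (fixed $s,L$, $q\to\infty$) I would prove matching lower and upper estimates to obtain the equality \eqref{LB}. For the lower bound take $q'=q$, so $k(q,q)=1$: as $q\to\infty$ the $s$ symbols in a coordinate are pairwise distinct with probability $1-o(1)$, each fresh symbol is covered with probability $(s/q)(1+o(1))$, hence $P(q,s,L)=(s/q)^L(1+o(1))$ and $-\ln P(q,s,L)=L\ln q\,(1+o(1))$, giving $\underline R(s,L,q)\ge\frac{L}{s+L-1}\ln q\,(1+o(1))$. For the reverse inequality I would use the elementary lower bound $P(q',s,L)\ge\Pr\{\eta_1=\dots=\eta_L=\xi_1\}=(1/q')^L$, so $-\ln P(q',s,L)\le L\ln q'$; together with $k(q,q')\ge q'/(q-1)$ this makes every term of the maximum at most $\frac{L(q-1)\ln q'}{(s+L-1)q'}$. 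Since $\ln q'/q'$ is decreasing for $q'\ge q$ once $q>e$, this expression is maximized at $q'=q$ and equals $\frac{L\ln q}{s+L-1}(1+o(1))$, so the maximum over $q'\ge q$ cannot exceed the lower estimate, and equality \eqref{LB} follows.

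The main obstacle I expect lies in the asymptotic and concentration analysis of $P(q',s,L)$. In Part~1 one must rigorously justify the inclusion--exclusion identity and the conditional independence of the $L$ covering events across the $N$ coordinates; in Parts~2 and~3 the delicate step is controlling the fluctuation of the union size $|U|$ (equivalently the number of distinct symbols among $s$ uniform draws) so that the leading-order estimates $P(q',s,L)\to(1-e^{-\beta})^L$ and $P(q,s,L)\sim(s/q)^L$ become precise, alongside confirming that $\beta=\ln2$ is the genuine maximizer of $g$ in Part~2 and that $q'=q$ is asymptotically optimal in Part~3.
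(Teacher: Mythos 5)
Your proof is correct, and it takes exactly the route this paper attributes to the source: Theorem~\ref{th::lowerBoundForLD} is not proved in the paper at all (it is imported as \cite[Theorem 2]{sh16}), but the paper describes its derivation as ``applying Proposition~\ref{lemQDecrease} and random coding arguments,'' which is precisely your expurgated random-coding bound over an enlarged alphabet $q'$ (with the single-coordinate covering probability $P(q',s,L)$ computed by inclusion--exclusion, the denominator $s+L-1$ coming from the expurgation) followed by the alphabet-reduction step and optimization over $q'$. One cosmetic slip in your Part~3: the inequality $k(q,q')\ge q'/(q-1)$ fails at $q'=q$ (where $k(q,q)=1$), but there the term of the maximum is bounded directly via $-\ln P(q,s,L)\le L\ln q$ by $\frac{L}{s+L-1}\ln q$, so the claimed equality \eqref{LB} is unaffected.
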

	
	The lower bound $\underline{R}(s,L,q)$ defined by~\req{qLowerR}-\req{koefQ} improves the best previously known bounds
	presented in~\cite{ge14,swc08, r89} in asymptotics ($q$ is fixed,  $s \to \infty$) and in a wide range of
	parameters~$(q, s, L)$ as well.
	Some numerical results and a comparison of bounds are presented in Table~\ref{tabR23Best},
	where $q'(s, L, q)$ denotes the argument of maximum~\req{qLowerR}.
	
	\setlength{\tabcolsep}{4pt}
	\begin{table}[ht]
		\caption{The best known lower bounds on $R(s, L, q)$}
		\label{tabR23Best}
		\begin{tabular}{|l|ccccc|}
			\hline
			\multicolumn{1}{|c|}{$s$} & $2$ & $3$ & $4$ & $5$ & $6$ \\
			\hline
			$R(s, 1, 2) \ge$ & $0.1438^{1, 2, 4}$ & $0.0554^2$ & $0.0304^2$ & $0.0194^2$ & $0.0134^2$ \\
			$q'(s, 1, 2)$ & $2$ & $6$ & $7$ & $9$ & $10$ \\
			$R(s, 2, 2) \ge$ & $0.1703^2$ & $0.0799^2$ & $0.0474^2$ & $0.0316^2$ & $0.0226^2$ \\
			$q'(s, 2, 2)$ & $2$ & $6$ & $8$ & $9$ & $10$ \\
			\hline
			$R(s, 1, 3) \ge$ & $0.2939^{1, 3, 4}$ & $0.1171^{1, 4}$ & $0.0551^1$ & $0.0360^1$ & $0.0253^1$ \\
			$q'(s, 1, 3)$ & $3$ & $3$ & $8$ & $8$ & $10$ \\
			$R(s, 2, 3) \ge$ & $0.3662^1$ & $0.1583^1$ & $0.0864^1$ & $0.0585^1$ & $0.0425^1$ \\
			$q'(s, 2, 3)$ & $3$ & $3$ & $8$ & $10$ & $10$ \\
			\hline
		\end{tabular}
		\\
		$^1$ Theorem \ref{th::lowerBoundForLD}
		\qquad
		$^2$ \cite{sh16}
		\qquad
		$^3$ \cite{ge14}
		\qquad
		$^4$ \cite{swc08}
	\end{table}

	\subsection{Upper Bounds on the rates $R(s,L, q)$ and $R^{(A)}(s,q)$}
	It was also conjectured in~\cite{sh16} that the lower bound~\req{LB} is tight.
	We prove the conjecture in
	
	\begin{theorem}\label{th::upperBoundListA}
		For any $s \ge 2$, $L \ge 1$ and $q \ge 2$ the rate $R(s,L, q)$
		of list-decoding $(s,L, q)$-codes
		satisfies the  inequality
		\beq{LowBoundJoinChannel}
		R(s,L, q) \leq \frac{L}{s+L-1}\, \ln q.
		\eeq
	\end{theorem}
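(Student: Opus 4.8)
The plan is to prove the explicit bound $t(s,L,q,N)\le C(s,L,N)\,q^{\frac{L}{s+L-1}N}$ with a factor $C(s,L,N)=2^{o(N)}$, which upon passing to $\varlimsup_{N\to\infty}\tfrac{\ln t}{N}$ immediately gives $R(s,L,q)\le\frac{L}{s+L-1}\ln q$. The approach generalizes the block--partition argument of Blackburn~\cite{blackburn2003frameproof} for frameproof codes (the case $L=1$). Put $r\eq s+L-1$ and, assuming $r\mid N$ for simplicity, split the coordinate set $[N]$ into $r$ equal blocks $P_1,\ldots,P_r$ of size $N/r$. For an $L$-subset $A\subseteq[r]$ write $P_A\eq\bigcup_{k\in A}P_k$ and call a codeword $\x(j)$ \emph{$A$-identified} if at most a threshold $\lambda=2^{o(N)}$ codewords of $X$ agree with $\x(j)$ on every coordinate of $P_A$. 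The heart of the argument is the following claim: in a list-decoding $(s,L,q)$-code every codeword is $A$-identified for at least one $L$-subset $A\subseteq[r]$. Granting this, each codeword is pinned down by the pair $(A,\x(j)|_{P_A})$ up to $\lambda$ choices, so $t\le\binom{r}{L}\,\lambda\,q^{L\cdot N/r}$; since $\binom{r}{L}$ and $\lambda$ are subexponential in $N$, the stated rate bound follows. For $L=1$ this reduces exactly to Blackburn's argument: with $r=s$ blocks the singletons $A=\{k\}$ give $\lambda=1$, and a codeword agreeing with no other on some block is identified by that block.

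Proving the claim is the main obstacle. I would argue the contrapositive: if a codeword $\x$ is \emph{not} $A$-identified for \emph{every} $L$-subset $A$, I must exhibit a forbidden configuration, namely a coalition of $s$ codewords whose union $U(\cdot)$ covers $L$ distinct codewords, contradicting the defining property of a list-decoding $(s,L,q)$-code. The easy half is a covering-design observation: the $L$-subsets $A_i\eq\{i,\ldots,i+L-1\}$, $i\in[s]$, cover $[r]$ (this is possible precisely because $(s-1)(L-1)\ge0$ yields $sL\ge r$), so choosing one codeword $\y_i\ne\x$ agreeing with $\x$ on $P_{A_i}$ already produces a coalition $\{\y_1,\ldots,\y_s\}$ whose union covers $\x$ on every block, hence covers $\x$.

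The genuine difficulty is to make the \emph{same} coalition cover $L$ distinct codewords simultaneously. A single representative per subset does not suffice, because covering $L$ differing targets on a block costs up to $L$ coalition members, which overruns the budget of $s$; the targets must instead be \emph{clustered}, agreeing on all but a few blocks. The plan is to extract such a cluster by exploiting the full failure of $A$-identification: each $P_A$ is matched by exponentially many codewords, so I would fix the first window $A_1$ (giving $\ge\lambda$ codewords agreeing with $\x$ on those $L$ blocks) and then iterate pigeonhole over the $s-1$ complementary blocks, at each stage retaining a large subfamily whose residual patterns can be jointly covered by one further member, until a cluster of $L$ targets is isolated together with $s$ members covering them. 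The delicate technical point—and the step I expect to be hardest—is the accounting: one must choose the threshold $\lambda$ subexponential in $N$ yet large enough that these nested pigeonhole steps never exhaust, i.e. that the covering of the $L$ clustered targets always completes within exactly $s$ coalition members. Carrying this bookkeeping through, while controlling the interaction between the block sizes $N/r$ and the per-step thresholds, is the crux that turns the heuristic random-coding exponent $\frac{L}{s+L-1}$ into a rigorous converse.
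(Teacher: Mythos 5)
Your outer skeleton matches the paper's: both arguments block the coordinates into $s+L-1$ parts (the paper does this at the very end, replacing each block by one symbol of an enlarged alphabet $Q = 2q^{\lceil N/(s+L-1)\rceil}$, which reduces everything to codes of length exactly $s+L-1$), and both aim at a bound of the shape $t\le \text{const}\cdot q^{L N/(s+L-1)}$. But the heart of your argument --- the claim that in a list-decoding $(s,L,q)$-code \emph{every} codeword is $A$-identified for some $L$-subset $A$ of blocks, with a subexponential threshold $\lambda=2^{o(N)}$ --- is precisely what you do not prove, and the pigeonhole route you sketch cannot be made to work with such a $\lambda$. Failure of identification only hands you slightly more than $\lambda$ codewords agreeing with $\x$ on each window; to extract from these a cluster of $L$ codewords that also agree with \emph{one another} on the complementary blocks (which is what you need so that $s$ coalition members can cover all $L$ targets) you must pigeonhole over the residual patterns on those blocks, and there are $q^{\Theta(N)}$ of them. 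With only subexponentially many friends this pigeonhole is vacuous; making it bite forces $\lambda=q^{\Omega(N)}$, and then your final count $t\le\binom{r}{L}\lambda\, q^{LN/r}$ no longer yields the rate $\frac{L}{s+L-1}\ln q$. You flagged this accounting as ``the crux,'' and it is exactly where the proposal breaks.

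The paper resolves this without any per-codeword identification claim and without clustering around a single codeword. It calls a codeword \emph{$L$-rare} if some window of $L$ cyclically consecutive coordinates is shared by at most $L-1$ other codewords, and shows by an injection that there are at most $NLq^L$ rare codewords (Lemma~\ref{Lrare}) --- note the inverted logic: a count, with \emph{constant} threshold $L-1$, of how many codewords are identified, rather than a statement that all of them are. Then, if $t> NLq^L\sum_{k=0}^{L-1}k!$, a greedy process builds a ``staircase'' chain of $L$ non-rare codewords $\x(j_1),\ldots,\x(j_L)$ in which consecutive members agree on length-$L$ windows slid by $s-1$ positions (Lemma~\ref{Ls-rare}); the bookkeeping overhead is only the constant factor $\sum_{k=0}^{L-1}k!$, because each step of the chain needs merely $L$ partners sharing one window, so each rare codeword can spoil at most $(k-1)!$ starting points. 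Finally (Lemma~\ref{s+L-1}), the chain itself is the cluster of $L$ targets you were trying to manufacture: thanks to the staircase overlaps, the relevant $sL$ (codeword, coordinate) pairs split into $s$ groups of $L$ consecutive coordinates on which all pertinent chain members coincide, and non-rarity of the chain members supplies one covering codeword per group; the union of these $s$ codewords then covers all $L$ chain members, contradicting the list-decoding property. In short, the missing idea is to \emph{chain with a constant threshold and sliding windows} instead of clustering around one non-identified codeword with a growing threshold; without it, your central claim has no proof and the proposed bound does not follow.
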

	In particular, Theorem~\ref{th::upperBoundListA} and Proposition~\ref{signat-list} yield to the following statement.
	\begin{corol}\label{cr::upperBoundA}
		For any $s \ge 2$ and $q \ge 2$, the rate of $s$-separable $q$-ary codes for the $A$--MAC
		satisfies the  inequality
		$$
		R^{(A)}(s,q)\le \frac{2}{s}\ln q.
		$$
	\end{corol}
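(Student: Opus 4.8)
The plan is to obtain the inequality as an immediate specialization of Theorem~\ref{th::upperBoundListA}, combined with the reduction recorded in Proposition~\ref{signat-list}. Both ingredients are already available, so the argument is a short chain of two inequalities rather than a new construction. First I would invoke Proposition~\ref{signat-list}, which states that every $s$-separable $q$-ary code for the $A$--MAC is in particular a list-decoding $(s-1,2,q)$-code; consequently $R^{(A)}(s,q)\le R(s-1,2,q)$ for all $s\ge2$ and $q\ge2$.

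Next I would feed the parameters $(s-1,2,q)$ into Theorem~\ref{th::upperBoundListA}. Setting $L=2$ and replacing the first argument $s$ of the theorem by $s-1$, the right-hand side $\frac{L}{s+L-1}\ln q$ specializes to
$$
\frac{2}{(s-1)+2-1}\,\ln q=\frac{2}{s}\,\ln q,
$$
so that $R(s-1,2,q)\le \frac{2}{s}\ln q$. Chaining this with the previous inequality yields $R^{(A)}(s,q)\le \frac{2}{s}\ln q$, exactly as claimed.

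The only point requiring a word of care is the admissible range of $s$. Theorem~\ref{th::upperBoundListA} is stated for a first argument at least $2$, so the substitution above is licit precisely when $s-1\ge2$, i.e.\ for $s\ge3$. In the remaining case $s=2$ the asserted bound reads $R^{(A)}(2,q)\le\frac{2}{2}\ln q=\ln q$, which is the trivial upper bound valid for the rate of any $q$-ary code: a separable code must have pairwise distinct codewords, hence $t^{(A)}(2,q,N)\le q^N$, and the claim follows from the definition~\req{Rsq}. Thus the statement is established for every $s\ge2$.

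There is essentially no obstacle internal to the corollary: once Theorem~\ref{th::upperBoundListA} and Proposition~\ref{signat-list} are granted, it reduces to a one-line arithmetic specialization together with the trivial boundary case. The genuine difficulty is displaced entirely into Theorem~\ref{th::upperBoundListA}, whose upper bound on the list-decoding rate is the substantive new ingredient; I would expect its proof to rely on a combinatorial counting argument in the spirit of the cycle-free bipartite-graph estimate used for Theorem~\ref{th::upperBoundB}, and that---rather than the present deduction---is where the real work lies.
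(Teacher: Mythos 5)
Your proof is correct and is essentially the paper's own argument: the paper obtains the corollary exactly by combining Proposition~\ref{signat-list} with Theorem~\ref{th::upperBoundListA}, specialized to $L=2$ and first argument $s-1$, which gives $R^{(A)}(s,q)\le R(s-1,2,q)\le \frac{2}{s}\ln q$. Your separate treatment of the boundary case $s=2$ (where the substitution falls outside the stated range $s\ge2$ of the theorem's first argument) is a point of care the paper silently glosses over, and the trivial bound $R^{(A)}(2,q)\le \ln q$ you supply there, via distinctness of codewords, is valid.
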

	\begin{proof}[Proof of Theorem~\ref{th::upperBoundListA}]
		Consider an arbitrary code $X$ of length $N$ and size $t$.
		For a convenience  of the proof, we will use indexes $j$ $(i)$ of codewords (rows) which can exceed $t$ $(N)$,
		assuming  that the indexes are cyclically ordered, i.e.,
		\beq{proptN}
		x_n(j)= x_{n'}(j'),\quad \text{for }n-n'\equiv 0 \mod \,N,\quad j-j'\equiv 0 \mod{t}.
		\eeq
		For a codeword $\x(j)\in\A_q^N$, $j\in[t]$, by 
		$$
		\x_n^{n+L-1}(j)\eq\left(x_n(j),\dots,x_{n+L-1}(j)\right)\in\A_q^L,
		$$
		we abbreviate a {\em projection} of the codeword $\x(j)$ {\em on the coordinates} $n$, $n + 1$, \ldots, $n + L - 1$.
		A codeword $\x(j)$, $j\in[t]$, is said to be an  \textit{$L$-rare} in $X$
		if there exists a row index $n\in[N]$ such that the number of codeword indexes $j'\in[t]$, $j'\ne j$, with the same projection $\x_n^{n+L-1}(j')=\x_n^{n+L-1}(j)$ 
		is at most~$L-1$.
		Let $r=r_L(X)$ be the number of  codewords which are $L$-rare in~$X$.
		For each $L$-rare  codeword $\x(j)$, we can choose a row index $n\in [N]$,
		a  $q$-ary sequence $(a_1,\dots,a_L)\in\A_q^L$ 
		and an ordinal number (from $1$ to $L$) of the  $\x(j)$ among all $\le L$ codewords $\x(j')$, $j'\in[t]$,
		for which $\x_n^{n+L-1}(j')=\x_n^{n+L-1}(j)=(a_1,\dots,a_L)$. This correspondence is injective. 
		Therefore, the following claim  holds.
		\begin{lemma}\label{Lrare}
			For any code $X$ of length $N$, the number $r_L(X)$
			of its $L$-rare codewords   satisfies the inequality
			\beq{upLrare}
			r=r_L(X)\leq N\,L\,q^L.
			\eeq
		\end{lemma}
		Now we formulate another auxiliary statement.
		\begin{lemma}\label{Ls-rare}
			If a $q$-ary  code $X$ of length $N$ has a size
			\beq{t-cond}
			t>N\,L\,q^L\sum_{k=0}^{L-1} k!,
			\eeq
			then there exists an ordered set of codewords
			$\L_s=(\x(j_1),\ldots,\x(j_L))$ 
			such that there is no $L$-rare codeword in $\L_s$.
			In addition, for any $k\in[L-1]$, the projections of 
			$\x(j_{k})$ and $\x(j_{k+1})$
			on the coordinates $1+k(s-1),\, 2+k(s-1),\ldots ,\,L+k(s-1)$ are the same, i.e.,
			\beq{ks-proj}
			\x_{1+k(s-1)}^{L+k(s-1)}(j_k)=\x_{1+k(s-1)}^{L+k(s-1)}(j_{k+1}),\quad k\in[L-1].
			\eeq
		\end{lemma}
		\begin{proof}[Proof of Lemma~\ref{Ls-rare}]
			For any $j_1 \in [t]$, we shall try to construct a sequence
			$\L(j_1)=(\x(j_1),\x(j_2),\ldots, \x(j_L))$ of $L$ codewords by the following rules.
			The first element of the sequence $\L(j_1)$ is $\x(j_1)$.
			Let a sequence $(\x(j_1),\x(j_2),\ldots, \x(j_k))$  of length $k$, $1 \leq k \leq L$, be already constructed.
			If the last codeword $\x(j_k)$ is $L$-rare in $X$, then the process ends with a failure.
			If $k=L$ and $\x(j_L)$ is not $L$-rare in $X$, then the process successfully ends.
			Otherwise, for $k\le L-1$, we consider $L$ indexes from $1 + k(s - 1)$ to $L + k(s - 1)$.
			Since the codeword $\x(j_k)$ is not $L$-rare in $X$, we can find at least $L$ other
			codewords  with the same projection on the coordinates  from $1 + k(s - 1)$ to $L + k(s - 1)$.
			Among them there are  at most $k-1$ codewords
			that  could be already  included in the sequence $\L(j_1)$ at the previous $k-1$ steps.
			Therefore, there exists a codeword which has not been used.
			Among all such unused codewords we uniquely choose the  codeword $\x(j_{k+1})$
			with the cyclically   smallest index $j_{k+1}$ so that $j_{k+1}>j_k$
			as the $(k+1)$th element of~$\L(j_1)$. 
			\begin{exmp}
				Let $t = 4$ and indexes $j_1=2$ and $j_2=5$ are already used in constructing the sequence, i.e., the first two element of the sequence $\L(j_1)$ are $(\x(2), \x(5))$. Recall that the indexes $1,5,9,\ldots$ correspond to the codeword index $1$ as they have the same residue modulo $t=4$. Let codewords with indexes $3\, (7,11,\ldots)$ and $4\, (8,12,\ldots)$ be candidates to be the codeword at the third step. Then $7$, corresponding to $3$, is the cyclically smallest index so that $7>5$, and at the third stage we build the sequence $(\x(2), \x(5), \x(7))$.
			\end{exmp}
			
			Let us prove that there exists a codeword $\x(j_1)$ for which the described process successfully ends,
			i.e., as a result, we obtain a sequence $\L_s:=\L(j_1)$ without $L$-rare codewords.
			The process ends with a failure if and only if the  codeword $\x(j_{k+1})$ is $L$-rare at some step $k\in[L-1]$.
			Fix an arbitrary $L$-rare codeword $\x(j)$. Given $k\in L$, let $j_1$ be some element of $[t]$ so that
			we add $\x(j_k)=\x(j)$ in the sequence $\L(j_1)$ at the $k$th step.
			By construction of the sequence $\L(j_1)$ we know that the codeword $\x(j_k)$  coincides with the codeword $\x(j_{k-1})$
			on the  $L$ coordinates:
			\beq{L-coincidence}
			1+(k-1)(s-1),\, 2+(k-1)(s-1),\ldots ,
			(L-1)+(k-1)(s-1),\, L+(k-1)(s-1),
			\eeq
			and has the
			cyclically smallest index $j_k>j_{k-1}$ among all codeword indexes,
			except possibly representative indexes from $\{j_1, \ldots, j_{k-2}\}$.
			Hence,
			the codeword $\x(j_{k-1})$ is the first codeword before $\x(j_k)$, except $\x(j_1)$, \ldots $\x(j_{k-2})$,
			which has the same symbols as $\x(j_{k})$ on the  $L$ coordinates~\req{L-coincidence}.
			The number of codewords among $\x(j_1)$, \ldots, $\x(j_{k-2})$, which have the same symbols as $\x(j_{k})$ and $\x(j_{k-1})$
			on the $L$ coordinates~\req{L-coincidence}
			is from $0$ to~$k-2$.
			Therefore, for fixed codeword $\x(j)$ and position $k\in[L]$, there exist  at most $k-1$  possible options for~$\x(j_{k-1})$.
			Thus, any $L$-rare codeword $\x(j)$, uniquely chosen  as the codeword $\x(j_k)$
			in the sequence $\L_s(j_1)$,  spoils  at most $(k-1)!$  of starting codewords~$\x(j_1)$.
			In virtue of condition~\req{t-cond} and  upper bound~\req{upLrare} from Lemma~\ref{Lrare}, the code size
			$
			t > r_L(X) \cdot \sum_{k=0}^{L-1}k!.
			$
			Therefore,
			there exists a starting codeword $\x(j_1)$, such that the sequence
			$\L(j_1)$ will be successfully constructed.
		\end{proof}
		\begin{lemma}\label{s+L-1}
			For   any list-decoding $(s,L, q)$-code $X$
			of  length $N=s+L-1$, the size  $t$ of the code $X$ is upper bounded as follows:
			\beq{upBoundOnT}
			t\le (s+L-1)\,L  q^L \,\sum\limits_{k=0}^{L-1} k!.
			\eeq
		\end{lemma}
		\begin{proof}[Proof of Lemma~\ref{s+L-1}]
			Consider an arbitrary list-decoding $(s,L, q)$-code $X$ of the length~$N=s+L-1$. We prove the claim of  this lemma by contradiction. Assume that $t>(s+L-1)\,L  q^L \,\sum_{k=0}^{L-1} k!$.
			In virtue of Lemma~\ref{Ls-rare}, 
			we can
			construct the sequence 
			$\L_s=(\x(j_1),\ldots, \x(j_L))$ so that there is no $L$-rare codeword in $\L_s$, and the property~\eqref{ks-proj} holds. Let $J=\{j_1,\ldots, j_L\}$ be the set of codeword indexes. Without loss of generality, we may assume the sequence $(j_1,j_2,\ldots,j_L)$ is lexicographically ordered or $j_k<j_{k+1}$ for $k\in[L-1]$, since, otherwise, we can take~\req{proptN} $j_{k+1}$ as $j_{k+1}+t\lceil j_k/t\rceil$.
			
			Now we shall find an $s$-collection $I=\{i_1,\ldots,i_s\}\subset[t]\setminus J$ consisting of codeword indexes such that $U(\x(i_1),\ldots, \x(i_s))$ covers $L$ codewords $\{x(j),\,j\in J\}$. Recall that by covering we mean that, for any pair $(j,n)$, $j\in J$, $n\in [N]$, there exists $i\in I$ so that the symbol $x_n(j)=x_n(i)$.   Define a lexicographically ordered sequence $\PP$ of pairs so that the first $s+L-1$ pairs are from $(j_1, 1)$ to $(j_1, s+L-1)$, and the following $(s-1)(L-1)$ pairs are of the form $(j_k,n)$, where $n$ runs over all row indexes from $L+1+(k-1)(s-1)$ to $L+k(s-1)$, i.e.,
			\begin{multline*}
			\PP\eq((j_1, 1), (j_1, 2),\ldots,(j_1, L + s-1), 
			\\
			(j_2, L+1+(s-1)),(j_2, L+2+(s-1)), \ldots, (j_2, L+2(s-1)),\ldots,\\
			(j_L, L+1+(L-1)(s-1)),(j_L, L+2+(L-1)(s-1))\ldots,(j_L, sL)).
			\end{multline*}
			From~\eqref{ks-proj} it follows that if, for any pair $(j,n)$ in $\PP$,  there exists $i\in I$ so that the symbol $x_n(j)=x_n(i)$, then the $s$-collection $I$ is a required one. It remains to find appropriate $I$. Notice that the length  of $\PP$ is $sL$, and the second number in pairs goes from $1$ to $sL$. Divide the sequence $\PP$ into $s$ subsequences of length $L$ so that $\PP=(\PP_1,\ldots, \PP_s)$. Let
			$$
			\PP_k\eq((j_{k_1},(k-1)L+1),(j_{k_2},(k-1)L+2), \ldots, (j_{k_L},kL)).
			$$
			It is easy to check that the projection $\x(j_{k_L})$ (the codeword index is the same as the first number in the last pair of $\PP_k$) on the coordinates $(k-1)L+1, (k-1)L+2, \ldots, kL$ is
			\begin{align*}
			\x_{(k-1)L+1}^{kL}(j_{k_L})=\left(x_{(k-1)L+1}(j_{k_1}), x_{(k-1)L+2}(j_{k_2}), \ldots, x_{kL}(j_{k_L})\right).
			\end{align*}
			From Lemma~\ref{Ls-rare}, it follows that the codeword $\x(j_{k_L})$ is not $L$-rare. Therefore, we can
			find an index $i_k$, $i_k\not\in J$, and the corresponding
			codeword  $\x(i_k)$ such that
			the projections of $\x(i_k)$ and $\x(j_{k_L})$ on the coordinates  $(k-1)L+1, (k-1)L+2, \ldots, kL$ are the same, i.e.,
			\beq{cover}
			\x_{(k-1)L+1}^{kL}(i_k)=\x_{(k-1)L+1}^{kL}(j_{k_L}).
			\eeq			
			 Since there are $s$ subsequences $\PP_k$, which form $\PP$, we can find at most $s$ different $i_k$ so that 
			$U(\x(i_1),\ldots, \x(i_s))$ covers $L$ codewords $\{\x(j),\,j\in J\}$.
			This contradiction finishes the proof of Lemma~\ref{s+L-1}.
		\end{proof}
		Lemmas~\ref{Ls-rare} and~\ref{s+L-1} are intuitively  illustrated by the following example.
		\begin{exmp}\label{L=4s=2construction}
			Let  $L=4$, $s=2$ and~$N=L+s-1=5$. Then four  $q$-ary  codewords $\x(j_k)$, $\x(j_k)\in\A_q^5$, $k\in\{1,2,3,4\}$,
			satisfying the equalities~\req{ks-proj}
			can be written in the form:
			$$
			\begin{matrix}
			\x(j_1)= & (x_1(j_1),& x_2(j_1),& x_3(j_1),& x_4(j_1),& x_5(j_1)),\cr
			\x(j_2)= & (y_2,     & x_2(j_1),& x_3(j_1),& x_4(j_1), & x_5(j_1)),\cr
			\x(j_3)= & (y_2,     & y_3,     & x_3(j_1),& x_4(j_1), & x_5(j_1)),\cr
			\x(j_4)= & (y_2,     & y_3,     & y_4 ,    & x_4(j_1), & x_5(j_1)).\cr
			\end{matrix}
			$$
			These codewords are covered by $U(\x(i_1),\x(i_2))$, where	two $q$-ary  codewords $\x(i_1)
			,\x(i_2)\in\A_q^5$ are based on the  property~\req{cover} and   can be written in the form:
			$$
			\begin{matrix}
			\x(i_1)= & (x_1(j_1),& x_2(j_1),& x_3(j_1),& x_4(j_1), & a_1),\cr
			\x(i_2)= & (y_2,     & y_3,     & y_4,     & a_2       & x_5(j_1)).\cr
			\end{matrix}
			$$
		\end{exmp}
		To complete the proof of Theorem~\ref{th::upperBoundListA},
		consider an arbitrary list-decoding $(s,L, q)$-code $X$ of length $N$, $N>s+L-1$,
		and  size~$t$.
		Divide each codeword of the code $X$ into $s + L - 1$ parts of sizes $n_i$, where $\len\frac{N}{s + L - 1}\rin\leq n_i\leq \lev\frac{N}{s + L - 1}\riv,
		\; i\in[s+L-1]$.
		The number of different parts is upper bounded by
		$q^{\len\frac{N}{s + L - 1}\rin} + q^{\lev\frac{N}{s + L - 1}\riv}$.
		Replace each part of each codeword with a unique symbol from the $Q$-ary alphabet of
		the size~$Q \eq 2q^{\lev\frac{N}{s + L - 1}\riv}$.
		It is easy to see that the code $X'$, obtained after replacements, is a $Q$-ary list-decoding $(s, L, Q)$-code
		of  length $N=s+L-1$ and  size $t$. Thus, the inequality~\req{upBoundOnT} of Lemma~\ref{s+L-1}
		implies that the size
		$$
		t \leq (s + L - 1)L\sum\limits_{n=0}^{L-1}n!2^Lq^{L\lev\frac{N}{s + L - 1}\riv}.
		$$
		This upper bound  immediately leads to~\req{LowBoundJoinChannel}.
	\end{proof}
\appendix
%
%
In this section, we first introduce a probabilistic relaxation of separable codes called almost separable codes, and then give random coding bounds on the error exponent of almost separable codes  and on the rate of separable codes for any $f$--MAC.
\subsection{Notations and Definitions}
Given the symmetric  $f$--MAC and  a $q$-ary code $X$, a message ${\e}\in{[t] \choose s}$ is said to be  {\em bad}
for the code $X$, if there exists a message 
${\bf e'}\ne{\bf e}$ such that
$\;\z^{(f)}({\bf e'},X)=\z^{(f)}({\bf e},X)$.  
If the unknown message $\bf e$ is interpreted as the random vector taking
equiprobable values in the set ${[t] \choose s}$, then the {\em relative number}
of ``bad'' messages among all ${t\choose s}=|{[t] \choose s}|$
messages can be considered  as the  {\em error probability} $\epsilon^{(f)}(X,s)$
of  code $X$ for the {\em brute force} decoding.
\begin{definition}
	A code $X$ is called an almost $s$-separable code for the $f$--MAC with the error probability $\epsilon$ if the relative number of bad messages in the code $X$ is at most $\epsilon$, that is $\epsilon^{(f)}(X,s)\le \epsilon$.
\end{definition}
Let us introduce the classical notation of the error exponent and the capacity.
\begin{definition}
	Fix a parameter $R>0$. Define the error probability for almost $s$-separable codes
	$$
	\epsilon^{(f)}(s,q,R,N)\eq\min\limits_{X:t=\lfloor 2^{RN}\rfloor}\epsilon^{(f)}(X,s),
	$$
	where the minimum is taken over all $q$-ary codes of length $N$ and size $t$. The function
	$$
	E^{(f)}(s,q,R)\eq\varlimsup_{N\to\infty} \frac{-\log_2\epsilon^{(f)}(s,q,R,N)}{N}
	$$
	will be referred to as the error exponent for almost $s$-separable codes. The quantity
	$$
	C^{(f)}(s,q) = \sup \left\{R:\,E^{(f)}(s,q,R)>0\right\}
	$$
	is called the capacity of almost $s$-separable codes.
\end{definition}
We again emphasize that the rate of separable codes is upper bounded by the capacity of almost separable codes.
\begin{pr::capacityRate}[\cite{d1989bounds}]
The rate $R^{(f)}(s,q)$ of $s$-separable codes
for the symmetric $f$--MAC satisfies the inequality
$$
R^{(f)}(s,q)\le\,C^{(f)}(s,q)\eq\frac{\max\limits_{\p}\,H_{\p}^{(f)}(s,q)}{s},
$$
where $H_{\p}^{(f)}(s,q)$ is the Shannon entropy~\eqref{entropy} of the output of the $f$--MAC for the given input probability distribution $\p$.
\end{pr::capacityRate}
\subsection{Random Coding Error Exponent for the $f$-MAC}
Let the symbol ${\cal P}^{(f)}_N\left(s,t,(N_0,\ldots,N_{q-1})\right)$ denote the {\em average} error probability
over the {\em fixed composition  ensemble} (briefly, $FC$-ensemble) of $t$ independent
$q$-ary codewords $\x(i)$ with the same
type~$T(\x(i))=(N_0,\ldots,N_{q-1})$. By a similar  symbol ${\cal P}^{(f)}_N\left(s,t,\p\right)$ we will
denote the {\em average} error probability
over the {\em completely randomized   ensemble} (briefly, $CR$-ensemble) of
$q$-ary codes $X = (x_i(j))$ with
independent components $ x_i(j)$ having the same
distribution~$\p$, i.e.,
$\Pr\{x_i(j)=x\}\eq p(x),\;$  $i\in[N],\;$  $j\in[t],\,$ $x\in\A_q$.

Let a symmetric  $f$--MAC is represented
as the conditional probability $\tau^{(f)}(z|x_1^s)$, that is
$$
\tau^{(f)}(z|x_1^s)=\begin{cases}
1,\quad z=f(x_1^s),\\
0,\quad z\neq f(x_1^s).
\end{cases}
$$
To formulate  the results  about the logarithmic asymptotic behavior
of  probabilities ${\cal P}^{(f)}_N\left(s,t,(N_0,.,N_{q-1})\right)$
and  ${\cal P}^{(f)}_N\left(s,t,\p\right)$, we need the following auxiliary notations~\cite{d03}.
Let
\beq{tau}
\tau\eq\left\{\tau(x^s_1,z)\,:\,\tau(x^s_1,z)\ge0,\;\sum_{x^s_1,z}\tau(x^s_1,z)=1
\right\}
\eeq
be  a probability distribution on the Cartesian product
$\A_q^s\times Z$.
Using the standard symbols for the
conditional probabilities of the distribution
$\tau$, we abbreviate by 
\beq{tau-f}
\{\tau\}^{(f)}\eq\left\{\tau\,:\, \tau^{(f)}(z|x_1^s)=0\;\Rightarrow\;
\tau(z|x_1^s)=0\right\}
\eeq
the subset of probability distributions $\tau$~\req{tau}  such that the conditional probability
$\tau(z|x_1^s)=0\,$ is implied by $\tau^{(f)}(z|x_1^s)=0$.

Introduce the $\cup$-convex information-theoretic
functions of the argument $\tau\in\{\tau\}^{(f)}$:
\beq{HI}
	{\cal H}^{(f)}\left(\p,\tau\right) \eq
	\sum\limits_{x_1^s\cdot z}\tau(x_1^s,z)
	\ln\frac{\tau(x^s_1,z)}{
		\tau^{(f)}(z|x_1^s)\cdot
		\prod\limits_{k=1}^s p(x_k)},\quad
	I_m\left(\p,\tau\right) \eq \sum\limits_{x^s_1\cdot z}\tau(x^s_1,z)
	\ln\frac{\tau(x^m_1|x_{m+1}^s,z)}
	{\prod\limits_{k=1}^m p(x_k)},\; m\in[s].
\eeq
From~\req{entropy}, it follows that
the distribution
$$
\tau^{(f)}_{\p}\eq\left\{\tau^{(f)}(z|x_1^s)\cdot
\prod\limits_{k=1}^s p(x_k),\; x^s_1\in\A_q^s,\;z\in Z \right\}\in\{\tau\}^{(f)}
$$
and the functions~\req{HI} satisfy the equalities
$$
{\cal H}^{(f)}\left(\p,\tau^{(f)}_{\p}\right)=0,\quad
I_s\left(\p,\tau^{(f)}_{\p}\right)=H^{(f)}_{\p}(s,q).
$$
Now we are ready to state two random coding bounds on the error exponent $E^{(f)}(s,q,R)$.
\begin{proposition}[\cite{d03, d1989bounds}]
	\label{random coding exponent}
	Let $s\ge2$, $q\ge2$, $R>0$  be fixed  and
	the entropy  $H^{(f)}_{\p}(s,q)$
	of a fixed distribution $\p$ is defined by~$\req{entropy}$.
	If 
	code parameters $N,\,t\to\infty$ such that
	$$
	\frac{\ln t}{N}\sim R,\quad\frac{N_x}{N}\sim p(x),\;x\in\A_q,
	\quad s,\, q-const,
	$$
	then for the $FC$-ensemble
	there exists   
	\beq{FCrandomExponent}
	\lim\limits_{N\to\infty}\,
	\frac{-\ln{\cal P}^{(f)}_N\left(s,t,(N_0,.,N_{q-1})\right)}{N}
	\eq E^{(f)}_{FC}(s,q,R,\p)>0, \quad
	0<R<\frac{H^{(f)}_{\p}(s,q)}{s},
	\eeq
	and for the $CR$-ensemble
	there exists   
	\beq{CRrandomExponent}
	\lim\limits_{N\to\infty}\,\frac{-\ln {\cal P}^{(f)}_N\left(s,t,\p\right)}{N}
	\eq E^{(f)}_{CR}(s,q,R,\p)>0, \quad
	0<R<\frac{H^{(f)}_{\p}(s,q)}{s}.
	\eeq
	For any fixed~$\p$, the positive monotonically decreasing functions
	$E^{(f)}_{FC}(s,q,R,\p)$ and $E^{(f)}_{CR}(s,q,R,\p)$
	are $\cup$-convex functions of the parameter~$R>0$ of
	the following form:
	$$
	E^{(f)}_{FC}(s,q,R,\p)\eq\min\limits_{m\in[s]} E^{(f)}_{FC}(s,q,R,\p,m),
	$$
	\beq{FC-randomExponent1}
	E^{(f)}_{FC}(s,q,R,\p,m)\eq
	\min\limits_{\{\tau\}^{(f)}(\p)}\left\{{\cal H}^{(f)}(\p,\tau)+[I_m(\p,\tau)-mR]^+\right\},
	\eeq
	and
	$$
	E^{(f)}_{CR}(s,q,R,\p)\eq\min\limits_{m\in[s]} E^{(f)}_{CR}(s,q,R,\p,m),
	$$
	\beq{CR-randomExponent1}
	E^{(f)}_{CR}(s,q,R,\p,m)\eq
	\min\limits_{\{\tau\}^{(f)}}\left\{{\cal H}^{(f)}(\p,\tau)+[I_m(\p,\tau)-mR]^+\right\}.
	\eeq
	The  minimum in~$\req{FC-randomExponent1}$ is
	taken over the subset $\{\tau\}^{(f)}(\p)$ 
	of distributions $\{\tau\}^{(f)}$~$\req{tau-f}$ for which the marginal
	probabilities on $x_k$ are fixed and coincide with $p(x_k)$, $k\in[s]$,
	i.e., 
	\beq{marginalProb}
	\{\tau\}^{(f)}(\p)\eq
	\left\{\tau\,:\,\tau\in\{\tau\}^{(f)}\right.\quad \text{and} \quad
	\left.\sum_{x_1^{k-1}} \sum_{x_{k+1}^s}
	\sum_{z} \tau(x_1^s,z)=p(x_k),\; k\in[s]\right\}.
	\eeq
	The  minimum in~$\req{CR-randomExponent1}$ is
	taken over the set of all distributions~$\req{tau-f}$. In addition, for any $\p$, the error exponent of almost separable codes for the $f$--MAC
	$$
	E^{(f)}(s,q,R)\ge E_{FC}^{(f)}(s,q,R,\p),\quad E^{(f)}(s,q,R)\ge E_{CR}^{(f)}(s,q,R,\p).
	$$
\end{proposition}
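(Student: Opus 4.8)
The plan is to establish both limits \req{FCrandomExponent}--\req{CRrandomExponent} by the method of types combined with a union bound over pairs of confusable messages, in the spirit of the random coding technique of~\cite{d03, d1989bounds}. First I would fix a message $\e\in{[t]\choose s}$ and classify every competing message $\e'\ne\e$ by the \emph{overlap parameter} $m\eq\card{\e\setminus\e'}\in[s]$, so that $\e$ and $\e'$ share $s-m$ codewords and differ in $m$ private ones on each side. Since $\epsilon^{(f)}(X,s)$ is the relative number of bad messages, averaging over the ensemble and applying the union bound reduces the expected error probability to a sum over $m\in[s]$ of (number of competitors at overlap $m$) $\times$ (collision probability at overlap $m$). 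The number of competitors per message is $\sim t^{m}=e^{mRN(1+o(1))}$, which supplies the term $mR$ appearing in \req{FC-randomExponent1} and \req{CR-randomExponent1}.

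The core of the argument is a large-deviation estimate for a single collision at a fixed overlap $m$. Here I would introduce the empirical joint type $\tau$ on $\A_q^s\times Z$ of the $s$ input symbols feeding the channel together with its output, restricted to the admissible set $\{\tau\}^{(f)}$ of \req{tau-f}, since only input patterns compatible with the deterministic law $\tau^{(f)}(z\mid x_1^s)$ can occur. Type counting shows that over the ensemble the probability of realizing a prescribed type $\tau$ decays as $e^{-N\,{\cal H}^{(f)}(\p,\tau)(1+o(1))}$, while the probability that a random competitor matches the shared configuration on its $m$ private codewords is $e^{-N\,I_m(\p,\tau)(1+o(1))}$; the quantities ${\cal H}^{(f)}$ and $I_m$ are exactly the information functions \req{HI}. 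Balancing the exponential count $e^{mRN}$ of competitors against this matching probability yields the penalty $[I_m(\p,\tau)-mR]^+$, and optimizing the total exponent over $\tau$ produces \req{FC-randomExponent1} and \req{CR-randomExponent1}. The two ensembles differ only in the feasible set: freezing the codeword composition imposes the marginal constraints \req{marginalProb} and gives the smaller set $\{\tau\}^{(f)}(\p)$ for the $FC$-ensemble, whereas for the $CR$-ensemble one optimizes over all of $\{\tau\}^{(f)}$.

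Once the per-$m$ exponents are in hand, I would take the minimum over $m\in[s]$, since the slowest-decaying collision type governs the error probability, and then verify the claimed analytic properties. Convexity and monotone decrease in $R$ follow because each $E^{(f)}(s,q,R,\p,m)$ is an infimum over $\tau$ of a constant plus the clipped affine-in-$R$ term $[I_m(\p,\tau)-mR]^+$, and a minimum of $\cup$-convex decreasing functions retains these properties. Positivity for $0<R<H^{(f)}_\p(s,q)/s$ is where the threshold enters: since ${\cal H}^{(f)}(\p,\tau)$ is a relative-entropy-type quantity it vanishes \emph{only} at $\tau=\tau^{(f)}_\p$, and there $I_s(\p,\tau^{(f)}_\p)=H^{(f)}_\p(s,q)$, so the total exponent can be zero only when simultaneously ${\cal H}^{(f)}=0$ and $[H^{(f)}_\p(s,q)-sR]^+=0$, i.e.\ $R\ge H^{(f)}_\p(s,q)/s$; for smaller $R$ it is strictly positive. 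Finally, since the bounds hold for the \emph{average} error probability over each ensemble, at least one code attains them, giving $E^{(f)}(s,q,R)\ge E^{(f)}_{FC}$ and $E^{(f)}(s,q,R)\ge E^{(f)}_{CR}$.

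I expect the main obstacle to be proving that the stated expressions are the \emph{exact} exponents, i.e.\ the existence of the limits in \req{FCrandomExponent}--\req{CRrandomExponent} rather than mere upper bounds on the error probability: this requires a matching converse showing that the dominant type is actually realized with the asserted probability, together with a careful treatment of the $FC$-ensemble's marginal constraints \req{marginalProb} so that the restricted optimization over $\{\tau\}^{(f)}(\p)$ is genuinely tight. Handling the boundary kinks of the clipped terms $[I_m-mR]^+$, and ensuring uniformity of the $o(1)$ corrections as $N\to\infty$ with $s,q$ fixed, are the remaining technical points.
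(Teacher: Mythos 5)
Be aware first that the paper itself contains no proof of this proposition: it is imported from \cite{d03,d1989bounds}, and Remark~2 in the Appendix states explicitly that the proofs (given in those references for the binary case $q=2$ only) are omitted, the general case $q\ge2$ being claimed to follow by the same methods. Your sketch follows exactly that method---union bound over the overlap parameter $m$, method of types for the joint input/output configuration, the two ensembles differing only through the marginal constraints \req{marginalProb}---so in approach you are aligned with the proofs the paper relies on.

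However, two steps of your sketch have genuine gaps. First, your positivity argument is incomplete as written: the exponent is a minimum over all $m\in[s]$, and you only rule out vanishing at $m=s$. Since ${\cal H}^{(f)}(\p,\tau)=0$ forces $\tau=\tau^{(f)}_{\p}$, vanishing of the $m$-th term is excluded only if $I_m\bigl(\p,\tau^{(f)}_{\p}\bigr)>mR$; to get the claimed threshold $H^{(f)}_{\p}(s,q)/s$ you therefore need $I_m\bigl(\p,\tau^{(f)}_{\p}\bigr)\ge \frac{m}{s}\,H^{(f)}_{\p}(s,q)$ for \emph{every} $m\in[s]$, not just $m=s$. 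This inequality is true but requires an argument you do not give: at $\tau=\tau^{(f)}_{\p}$ one has $I_m=I(X_1^m;Z\mid X_{m+1}^s)$ with i.i.d.\ inputs; the chain rule gives $I_s=\sum_{k=1}^{s}I(X_k;Z\mid X_{k+1}^s)$, and because conditioning on additional \emph{independent} inputs can only increase the information an input carries about $Z$ (here the symmetry of $f$ is used, via exchangeability), the terms are non-increasing in $k$, so the first $m$ of them sum to at least $\frac{m}{s}I_s$. Without this, the stated positivity range $0<R<H^{(f)}_{\p}(s,q)/s$ is not established. Second, the proposition asserts the \emph{existence of the limits} in \req{FCrandomExponent}--\req{CRrandomExponent}, which requires a matching lower bound on the ensemble-average error probability (showing the dominant type is actually realized at the asserted exponential rate); you flag this yourself as ``the main obstacle'' but do not supply it, and it is precisely what separates an exact error exponent from the routine achievability computation. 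So the proposal is a correct plan for the upper-bound half, modulo the missing $m<s$ argument, and leaves the exact-limit half open.
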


\if0
The positive monotonically decreasing function
\beq{random error exponent}
E^{(f)}_{\p}(s,q,R)>0,\quad 0<R<\frac{H^{(f)}_{\p}(s,q)}{s},
\eeq
is called the {\em random coding  error exponent} of  symmetric $f$-MAC for
the fixed-composition ensemble
identified by   distribution~$\p$~$\req{p}$.
\fi

\begin{remark}
	\label{rem2}
	Propositions~\ref{entropy bound},\ref{random coding exponent}
	and the  properties of  the {\em random error exponents}~\req{FCrandomExponent}
	and~\req{CRrandomExponent}                
	were formulated and proved in the papers~\cite{d1989bounds} and~\cite{d03}
	for the particular binary  case $q=2$ only.
	In the  general case $q\ge2$, we omit the proofs 
	because one can check that the given results
	are based on the same methods developed in~\cite{d1989bounds} and~\cite{d03}.
	Here we only note that for the  symmetric $f$-MAC,  definitions~\req{FC-randomExponent1}-\req{marginalProb}
	leads to the inequality
	$$
	E^{(f)}_{CR}(s,q,R,\p)\,\le\, E^{(f)}_{FC}(s,q,R,\p).
	$$
\end{remark}

Introduce the function
$$
E^{(f)}_{FC}(s,q,R)\eq\max_{\p} E^{(f)}_{FC}(s,q,R,\p)>0
$$
if $0<R<C^{(f)}(s,q)$,
where $C^{(f)}(s,q)$ is defined in the right-hand side~\req{f-up}.
Hence,  Propositions~\ref{entropy bound} and~\ref{random coding exponent}
imply  that the number
$C^{(f)}(s,q)$  can be considered as the
Shannon {\em capacity} of separable  $(s,q)$-codes for the
symmetric~$f$-MAC~\cite{mm80}.

The following
statement called the random coding lower bound on the rate $R^{(f)}(s,q)$
of $s$-separable  $q$-ary codes for the symmetric~$f$-MAC can be obtained as a
consequence of Proposition~\ref{random coding exponent}.

\begin{proposition}[\cite{d03}]
	\label{Lower tau(f)}
	The rate $R^{(f)}(s,q)$ of $s$-separable   $q$-ary codes
	for the  symmetric $f$-MAC satisfies the inequality
	$$
	R^{(f)}(s,q)\,\ge\,\underline{R}^{(f)}(s,q),\quad s\ge2,\;q\ge2,
	$$
	where for any  fixed distribution $\p$ the lower bound $\underline{R}^{(f)}(s,q)$
	can be represented in the form
	\beq{FC-lower}
	\underline{R}^{(f)}(s,q)\eq
	\min\limits_{m\in[s]}\,
	\frac{E^{(f)}_{FC}(s,q,0,\p,m)}{s+m-1}
	=\min\limits_{m\in[s]}\,
	\frac{\min\limits_{\{\tau\}^{(f)}(\p)}\left\{{\cal H}^{(f)}(\p,\tau)+I_m(\p,\tau)\right\}}{s+m-1}
	\eeq
	or in the form
	\beq{CR-lower}
	\underline{R}^{(f)}(s,q)\eq
	\min\limits_{m\in[s]}\,
	\frac{E^{(f)}_{CR}(s,q,0,\p,m)}{s+m-1}
	=\min\limits_{m\in[s]}\,
	\frac{\min\limits_{\{\tau\}^{(f)}}\left\{{\cal H}^{(f)}(\p,\tau)+I_m(\p,\tau)\right\}}{s+m-1}
	\eeq
\end{proposition}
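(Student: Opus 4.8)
The plan is to derive this as a corollary of the random coding error exponent of Proposition~\ref{random coding exponent}, via an expurgation argument that turns an \emph{almost} $s$-separable code with exponentially few bad messages into an \emph{exactly} $s$-separable code of essentially the same size. I would run the argument twice, once over the $FC$-ensemble (to obtain~\req{FC-lower}) and once over the $CR$-ensemble (to obtain~\req{CR-lower}); the two proofs are verbatim the same except for the constraint set on which $\tau$ ranges, namely $\{\tau\}^{(f)}(\p)$ versus $\{\tau\}^{(f)}$, and by Remark~\ref{rem2} the $FC$ version is the stronger of the two.

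First I would fix $\p$, pick a target rate $R<\underline{R}^{(f)}(s,q)$, and set $t=\lfloor e^{RN}\rfloor$. Since $\underline{R}^{(f)}(s,q)\le C^{(f)}(s,q)\le \max_{\p}H^{(f)}_{\p}(s,q)/s$, we stay inside the validity range $0<R<H^{(f)}_{\p}(s,q)/s$ of Proposition~\ref{random coding exponent}, which then gives that the average fraction of bad messages over the ensemble is of exponential order $e^{-N E^{(f)}_{CR}(s,q,R,\p)}$. Hence the expected number of bad messages is of exponential order $e^{N[\,sR-E^{(f)}_{CR}(s,q,R,\p)\,]}$. Now comes the expurgation: some code in the ensemble attains at most this many bad messages, and deleting one codeword from each bad $s$-subset destroys every bad message while leaving every surviving message non-bad (hence with a unique output) in the resulting subcode. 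Provided the expected number of bad messages is at most $t/2$, we retain at least $t/2$ codewords and obtain an exactly $s$-separable code of rate $R(1+o(1))$. The needed inequality is $sR-E^{(f)}_{CR}(s,q,R,\p)<R$, i.e. $E^{(f)}_{CR}(s,q,R,\p)>(s-1)R$.

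The crux is to show that this expurgation threshold is exactly $\underline{R}^{(f)}(s,q)$. I would decompose the count of bad messages by the overlap deficit $m=s-|\e\cap\hat\e|$. For a fixed $\e$ and fixed $m$, Proposition~\ref{random coding exponent} gives that the probability that some $\hat\e$ of deficit $m$ collides with $\e$ has exponent $E_m(R)\eq\min_{\tau}\{\mathcal{H}^{(f)}(\p,\tau)+[I_m(\p,\tau)-mR]^+\}$, where the $e^{mRN}$ choices of the $m$ fresh codewords of $\hat\e$ are precisely what produces the $-mR$ inside the positive part and $I_m$ is the matching covering exponent. Writing $E^{(f)}_{CR}(s,q,R,\p)=\min_{m\in[s]}E_m(R)$, the expurgation condition $E^{(f)}_{CR}(s,q,R,\p)>(s-1)R$ becomes $E_m(R)>(s-1)R$ for every $m$. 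In the linear regime $E_m(R)=E_m(0)-mR$ this is equivalent to $E_m(0)>(s+m-1)R$, so minimizing over $m$ yields $R<\min_{m}E_m(0)/(s+m-1)=\underline{R}^{(f)}(s,q)$, since at $R=0$ one has $[I_m]^+=I_m$ and therefore $E_m(0)=E^{(f)}_{CR}(s,q,0,\p,m)$ (and likewise $E^{(f)}_{FC}(s,q,0,\p,m)$ for the $FC$-ensemble).

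The main obstacle is this last matching step: I must verify that the positive part $[\,\cdot\,]^+$ is \emph{inactive} at the binding $m$ and at $R=\underline{R}^{(f)}(s,q)$, so that the linear relation $E_m(R)=E_m(0)-mR$ holds there rather than $E_m(R)$ being flattened to $\min_{\tau}\mathcal{H}^{(f)}(\p,\tau)$. This is exactly where I would invoke the convexity and monotonicity of $E_m(R)$ in $R$ asserted in Proposition~\ref{random coding exponent}, together with $I_m\ge0$. Everything else — in particular the method-of-types estimate that yields the exponent $\mathcal{H}^{(f)}+I_m$ and is responsible for the improvement over the naive fixed-configuration bound $\min_m(-\ln\pi_m)/(s+m-1)$ — is already packaged inside Proposition~\ref{random coding exponent}. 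Letting $R\uparrow\underline{R}^{(f)}(s,q)$ finally gives $R^{(f)}(s,q)\ge\underline{R}^{(f)}(s,q)$.
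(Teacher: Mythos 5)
Your proposal is correct and is essentially the argument the paper intends: the paper offers no self-contained proof of this proposition, saying only that it ``can be obtained as a consequence of Proposition~\ref{random coding exponent}'' by the same arguments as in~\cite{d03} (where the binary $B$--MAC case is treated), and your random-coding-plus-expurgation derivation --- decompose bad pairs by the overlap deficit $m$, pay a union-bound factor of order $t^{s+m-1}$, and charge each overlap class the zero-rate exponent $E^{(f)}_{CR}(s,q,0,\p,m)=\min_{\tau}\left\{{\cal H}^{(f)}(\p,\tau)+I_m(\p,\tau)\right\}$ --- is exactly that argument, run once per ensemble to get \req{FC-lower} and \req{CR-lower}.

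One correction to your own assessment: the step you call ``the main obstacle'' is not an obstacle, and the tool you propose for it (convexity/monotonicity of the exponent) is neither the right one nor needed. Writing $E_m(R)\eq E^{(f)}_{CR}(s,q,R,\p,m)$, you do not need the exact linearity $E_m(R)=E_m(0)-mR$ at the binding $m$ (this can genuinely fail when the positive part clips, e.g.\ whenever $E_m(0)-mR<0\le E_m(R)$); you only need the one-sided bound $E_m(R)\ge E_m(0)-mR$, which is immediate from $[x]^+\ge x$: for every admissible $\tau$ one has ${\cal H}^{(f)}(\p,\tau)+[I_m(\p,\tau)-mR]^+\ge{\cal H}^{(f)}(\p,\tau)+I_m(\p,\tau)-mR\ge E_m(0)-mR$. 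Hence $R<\min_m E_m(0)/(s+m-1)$ already forces $\min_m E_m(R)>(s-1)R$, which is precisely your expurgation condition; clipping of the positive part can only raise $E_m(R)$ and therefore only helps you. A second, smaller point: to stay inside the validity range of Proposition~\ref{random coding exponent} you should argue for the \emph{fixed} $\p$ rather than via $\max_{\p}H^{(f)}_{\p}(s,q)/s$; this works because taking $\tau=\tau^{(f)}_{\p}$ gives $E_s(0)\le H^{(f)}_{\p}(s,q)$, whence $\underline{R}^{(f)}(s,q)\le H^{(f)}_{\p}(s,q)/(2s-1)<H^{(f)}_{\p}(s,q)/s$ for $s\ge2$.
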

In paper~\cite{d03}, Proposition~\ref{Lower tau(f)} was proved for
the particular case of the $B$-MAC with  binary $(q=2)$ alphabet only.
For the arbitrary symmetric~$f$-MAC, one can use the same arguments.
The asymptotic lower bound on the rate $R^{(disj)}(s)$ for the disjunctive MAC
formulated in~Sect.~\ref{DISJSECT} was actually  obtained in~\cite{d03} as a nontrivial 
consequence of Proposition~\ref{Lower tau(f)}. 
	\bibliographystyle{ieeetr}
	\bibliography{separable}

\end{document}